%% file: main.tex
\documentclass[journal]{IEEEtran}

\input{utils/notation}

\newif\ifarxiv
\arxivtrue

\newif\ifincludeappendices
\includeappendicestrue

\begin{document}

\title{Adaptive Calibration of Day-Ahead Flexibility Reserve Requirements Under Distribution Shift}

\author{Napoleon~Costilla-Enriquez,~\IEEEmembership{Member,~IEEE},
    Yang~Weng,~\IEEEmembership{Senior~Member,~IEEE},
    and~Miguel~A.~Ortega-Vazquez,~\IEEEmembership{Senior~Member,~IEEE}
    \thanks{N. Costilla-Enriquez and M. A. Ortega-Vazquez are with the
    Electric Power Research Institute, Palo Alto, CA, USA (e-mail:
    ncostilla-enriquez@epri.com; maov@epri.com).}
    \thanks{Y. Weng is with Arizona State University, Tempe, AZ, USA
    (e-mail: yweng2@asu.edu).}\vspace{-10mm}}

\ifarxiv
    \onecolumn
    \thispagestyle{empty}
    \noindent
    This work has been submitted to the IEEE for possible publication.
    Copyright may be transferred without notice, after which this version
    may no longer be accessible.
    \vspace*{\fill}
    \setcounter{page}{0}
    \twocolumn
\fi

\maketitle

\input{Sections/00_abstract}
\input{Sections/01_introduction}
\input{Sections/02_background}
\input{Sections/03_online_cqr}
\input{Sections/04_online_crc}
\input{Sections/05_experimental_design}
\input{Sections/06_results}
\input{Sections/07_conclusion}

\bibliographystyle{IEEEtran}
\bibliography{utils/bibliography}

\ifincludeappendices
    \clearpage
    \appendices
    \raggedbottom
    \input{Appendices/proofs}
\fi

\end{document}

%% file: utils/notation.tex
\usepackage{amsmath,amssymb,amsthm}
\usepackage{bm}
\usepackage{bbm}
\usepackage{graphicx}
\usepackage{booktabs}
\usepackage[caption=false,font=footnotesize]{subfig}
\usepackage[table]{xcolor}
\usepackage{float}
\usepackage{placeins}
\usepackage{cite}
\usepackage[obeyspaces]{xurl}
\usepackage[hidelinks]{hyperref}
\usepackage[capitalise,nameinlink]{cleveref}

\newtheorem{theorem}{Theorem}
\providecommand{\Description}[1]{}

\newcommand{\indicator}[1]{\mathbbm{1}\!\left\{#1\right\}}
\newcommand{\E}{\mathbb{E}}
\newcommand{\Prob}{\mathbb{P}}
\DeclareMathAlphabet{\mathdutchcal}{U}{dutchcal}{m}{n}
\SetMathAlphabet{\mathdutchcal}{bold}{U}{dutchcal}{b}{n}
\newcommand{\netload}{\mathdutchcal{n}}

\newcommand{\cG}{\mathcal{G}}
\newcommand{\cH}{\mathcal{H}}

\newcommand{\cR}{\mathcal{R}}

\newcommand{\sR}{\mathbb{R}}

\crefname{equation}{Eq.}{Eqs.}
\Crefname{equation}{Equation}{Equations}
\crefname{figure}{Fig.}{Figs.}
\Crefname{figure}{Figure}{Figures}
\crefname{table}{Table}{Tables}
\Crefname{table}{Table}{Tables}

%% file: Sections/00_abstract.tex
\begin{abstract}
The increasing integration of variable renewable generation introduces
weather-driven variability and uncertainty into power-system operations. System
operators manage the resulting deviations between day-ahead forecasts and
real-time net load by scheduling upward and downward flexibility reserves.
These requirements have traditionally been set using simple deterministic
rules. Newer data-driven methods tailor reserves to forecasts and operating
conditions. However, their reliability may deteriorate when conditions depart
from the historical data. We address this
limitation with a sequential framework
that updates day-ahead reserve requirements as operating outcomes become
available. Within this framework, Adaptive Conformal Inference targets the
frequency of insufficient reserve, whereas Rolling Risk Control targets
expected uncovered deviation, the mean residual beyond the scheduled reserve.
We prove finite-horizon bounds on realized miscoverage and capped
uncovered-deviation loss. We
test the framework on real ERCOT data in a chronological deployment simulation
that respects operational timing: outcomes from one day first affect the
reserve schedule two days later. Training and validation use 2022--2023 data;
the uninterrupted test covers 2024--2025.
Coverage calibration achieves approximately 90\% coverage, with upward and
downward miscoverage near 5\%. Risk calibration brings expected uncovered
deviations close to their budgets while using 5.3\% less reserve than the
uncalibrated requirements. The framework thus produces adaptive reserve
requirements that provably control the selected reliability metrics over finite
horizons without assuming stationarity or exchangeability.
\end{abstract}

\begin{IEEEkeywords}
Adaptive conformal inference, distribution shift, expected uncovered
deviation, flexibility reserves, online calibration, power system operations,
reserve sizing, rolling risk control.
\end{IEEEkeywords}

%% file: Sections/01_introduction.tex
\section{Introduction}
\label{sec:introduction}

Variable renewable energy sources~(VRES), energy storage, distributed energy
resources, electrification, demand-side flexibility, and data-center demand
are reshaping power-system operations~\cite{nerc2025ltra}. Together, these
changes alter net-load \emph{uncertainty} and
\emph{variability}~\cite{ortega2020risk,ortega2022generation}. Day-ahead
operating decisions rely on forecasts, but realized net load can differ from
the scheduled forecast because of forecast errors, fluctuations within the
operating interval, or both. Throughout this paper, \emph{net-load deviation}
denotes this forecast-to-realization difference. Its distribution depends on
weather, hour, season, load, and renewable production, and can evolve with the
power system and its forecasting practices. \Cref{fig:forecast_errors}
illustrates this behavior in ERCOT through seasonal changes in dispersion and
interannual changes in the mean and tails. To manage these deviations,
operators schedule
flexibility reserves\footnote{In this paper, flexibility reserves denote the upward and
downward capacity scheduled day ahead to cover continuous
forecast-to-realization net-load deviations over each operating interval. They
are distinct from contingency reserves for discrete outages and regulation
reserves for faster balancing services.} to balance supply and demand while the
deviation distribution changes
\cite{kirschen2018fundamentals,holttinen2012methodologies,
costilla2023operating}.
When scheduled reserve is insufficient, the residual imbalance must be managed
through faster corrective actions; severe shortages can contribute to
under-frequency and could even require load shedding, while severe surpluses
can increase curtailment and over-frequency risk.

Reserve products and procurement rules vary across system operators, including
ERCOT, CAISO, and NYISO~\cite{ercotASMethodology2026,caisoBPM2026,
nyisoASManual2026}. The methods used to set reserve requirements range from
deterministic rules to probabilistic calculations and data-driven approaches.
Deterministic rules set fixed requirements or formulas tied to a few system
quantities~\cite{ela2011operating,holttinen2012methodologies}. They are
transparent and easy to apply but cannot reflect all conditions that shape
net-load deviations. Probabilistic and data-driven studies estimate deviation
distributions, prediction intervals, or imbalance risk and select reserve
requirements to meet reliability targets~\cite{matos2011setting,
bessa2012reserve,costilla2023operating,zhao2021operating,
fahiman2019dynamic,devos2019dynamic}. Estimating a complete conditional
distribution becomes difficult when many operating variables are considered
because few historical observations resemble a particular combination of
weather, hour, load, and renewable production~\cite{bishop2006pattern}.
Quantile models address this limitation by estimating the required conditional
quantiles directly from richer inputs, using methods such as gradient-boosted
trees, neural networks, and quantile regression forests
\cite{chen2016xgboost,cannon2011quantile,meinshausen2006quantile}. NESO's
Dynamic Reserve Setting project illustrates this approach with
gradient-boosted quantile models that use operational and weather data to
recommend condition-dependent day-ahead reserve levels~\cite{nesoDRS2020}.

Quantile models tailor reserve requirements to forecast operating conditions,
but nominal quantiles do not provide a finite-sample guarantee of realized
coverage. Coverage is the probability or observed frequency with which the
scheduled interval contains the realized deviation. Changes in the deviation
distribution compound this limitation: the seasonal and interannual
differences in~\cref{fig:forecast_errors} show why a calibration that is
reliable in one period can become outdated in another. Periodic retraining
updates the predictions but does not explicitly correct deviations from the
coverage target. A separate mechanism is therefore needed to calibrate reserve
requirements from their observed performance.

\begin{figure*}[t]
    \centering
    \includegraphics[width=0.94\textwidth]{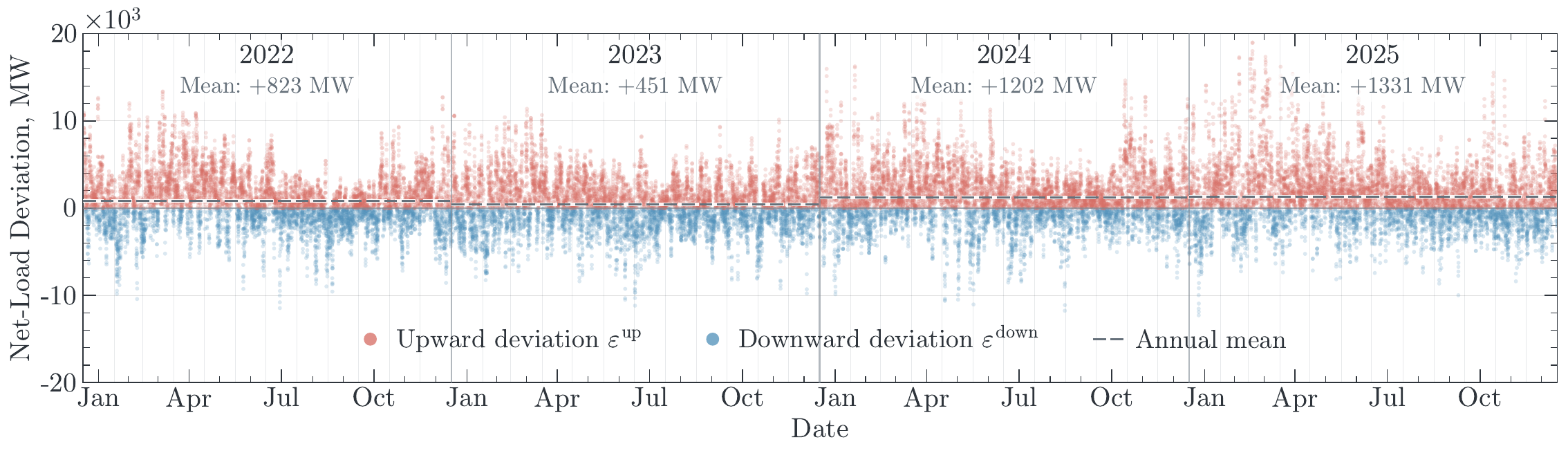}
    \caption{ERCOT net-load deviations,
    2022--2025. Positive and negative values require upward and downward
    reserve, respectively; dashed segments show annual means. Seasonal changes
    in dispersion and interannual changes in the mean and tails illustrate
    heteroscedasticity and distribution shift.}
    \Description{A time-series scatter plot shows positive and negative ERCOT
    forecast-to-realization net-load deviations from 2022 through 2025, with
    annual mean segments
    and vertical year boundaries.}
    \label{fig:forecast_errors}
\end{figure*}

Conformalized Quantile Regression~(CQR) adjusts fitted conditional quantiles to
provide finite-sample marginal coverage under exchangeability\footnote{A
collection of observations is exchangeable if its joint distribution is
unchanged when the observations are reordered.}
\cite{romano2019conformalized}. The calibration can be applied to any fitted
quantile model without a parametric deviation distribution.
Previous conformal applications in power systems include electricity-price and
renewable-generation forecasting~\cite{kath2021conformal,dhaka2026physics}.
Applying conformal calibration to reserve sizing requires translating observed
reserve performance into subsequent upward and downward requirements.
Here, the calibrated interval is an operational input: its endpoints become
physical upward and downward reserve requirements, and failures in the two
directions can have different operational consequences.

Coverage is important but incomplete: a small exceedance and a severe one
contribute equally to miscoverage, although they leave very different
imbalances for the operator to manage. We therefore also consider \emph{expected uncovered
deviation}~(EUD), the expected amount of deviation left uncovered after the
scheduled upward or downward reserve is exhausted. Conformal Risk Control~(CRC)
complements CQR by calibrating reserve size to bound this loss
\cite{angelopoulos2024conformal}. In their standard offline form, CQR and CRC
use a fixed historical calibration, and their guarantees rely on exchangeability
between calibration data and future observations~\cite{barber2023beyond}. In
power systems, however,
the resource mix and forecasting practices evolve, changing operating
conditions and the distribution of net-load deviations. A fixed calibration
can therefore become outdated, so coverage or EUD may depart from its target.

We therefore formulate reserve sizing as a sequential calibration problem. The
goal is to update condition-dependent upward and downward requirements from
observed performance so that miscoverage and EUD remain aligned with their
targets as the deviation distribution changes. This formulation distinguishes
how often reserve is exceeded from how much deviation remains uncovered,
treats the two exposure directions separately, and uses each outcome only after
it becomes available.

Our framework combines condition-dependent reserve estimates with adaptive
calibration to control realized miscoverage and EUD as conditions change.
Adaptive Conformal Inference~(ACI) calibrates CQR for miscoverage
\cite{gibbs2021adaptive}, whereas Rolling Risk Control~(Rolling RC) calibrates
reserve requirements for uncovered-deviation loss
\cite{feldman2023rolling}. We implement both methods chronologically, using
only the information available when each schedule is formed, and prove
finite-horizon bounds on realized time-average miscoverage and capped EUD loss
without assuming stationarity or exchangeability.

Although we present the proposed approach for hourly day-ahead-to-real-time
deviations, it is flexible and can be adapted to other market configurations,
time horizons, scheduling practices, and feedback delays imposed by data
availability.

The main contributions are:
\begin{enumerate}
    \item We formulate sequential reserve sizing for miscoverage and EUD,
    mapping adaptive states into deterministic upward and downward
    requirements. These requirements can be integrated directly into existing
    unit-commitment and production-cost models without changing their
    formulation or requiring scenario-based stochastic optimization
    \cite{kirschen2018fundamentals}. The formulation also respects operational
    timing: each reserve schedule is calibrated using only the information
    available when it is formed.
    \item We apply ACI-CQR to adaptive reserve sizing and introduce Directional
    ACI-CQR, with separate lower- and upper-tail updates. We prove
    finite-horizon identities and bounds for total and directional
    miscoverage without stationarity or exchangeability.
    \item We develop directional Rolling RC schemes for separate EUD budgets.
    We prove finite-horizon bounds on capped uncovered-deviation loss by
    direction and calibration group.
    \item On an uninterrupted 2024--2025 ERCOT test sequence, with all choices
    fixed using 2022--2023 data, we quantify the reserve cost of correcting
    directional imbalance and controlling reliability separately by hour under
    seasonal and interannual distribution changes.
\end{enumerate}

The rest of the paper is organized as follows. \Cref{sec:background}
formulates the reserve-sizing problem. \Cref{sec:cqr,sec:crc} develop the
adaptive calibration methods, \cref{sec:experiments} presents their empirical
evaluation, and \cref{sec:conclusion} concludes the paper.

%% file: Sections/02_background.tex
\section{Background and Problem Formulation}
\label{sec:background}

\subsection{Problem Formulation and Operational Timing}
\label{subsec:problem}

System operators procure upward and downward flexibility reserves to manage
deviations between forecast and realized net load. Changes in the resource and
demand mix can alter the magnitude and distribution of these deviations over time.
Depending on the scheduling horizon and temporal resolution, a net-load
deviation may reflect forecast uncertainty, intra-interval variability, or
both~\cite{ortega2022generation}.

Let $d\in\{1,\ldots,D\}$ index operating days and let
$h\in\cH:=\{0,\ldots,H-1\}$ index the operating intervals within a day. Let
$\netload^{\mathrm{F}}_{d,h}$ denote the net-load forecast for interval $(d,h)$
(e.g., a day-ahead forecast), and let
$\netload^{\mathrm{A}}_{d,h}$ denote the real-time realization of net load. The
net-load deviation is
\begin{equation}
    \varepsilon_{d,h}
    :=\netload^{\mathrm{A}}_{d,h}-\netload^{\mathrm{F}}_{d,h}.
    \label{eq:net_load_deviation}
\end{equation}
A positive deviation represents a shortage, or short exposure, and requires
upward reserve. If it is not covered, the system may experience under-frequency
or negative area control error and could even require load shedding. A negative
deviation represents a surplus, or long exposure, and requires downward
reserve. If it is not covered, the system may experience over-frequency,
positive area control error, or excessive VRES
curtailment~\cite{ortega2020risk}.
We index the two exposure directions by
$\mathcal J:=\{\mathrm{short},\mathrm{long}\}$.

The net-load deviation $\varepsilon_{d,h}$ captures the combined effect of
deviations in load, wind, and solar generation. Its conditional distribution
can therefore depend on the forecast levels of these components, hour of day,
season, weather, and other predictors. We denote these conditioning variables
by the feature vector $\bm{\phi}_{d,h}\in\sR^p$. These dependencies can also change
over time.

For each interval $(d,h)$, the operator uses the current feature vector
$\bm{\phi}_{d,h}$ and all previously revealed outcomes to select nonnegative
upward and downward reserves
$\bigl(r^{\mathrm{up}}_{d,h},r^{\mathrm{down}}_{d,h}\bigr)
\in\sR_{\geq0}^{2}$ before $\varepsilon_{d,h}$ is observed. The reserve pair
is selected to cover the net-load deviation with the desired reliability,
specified through either a coverage target or an EUD budget.

Reserve decisions must follow the timing and information available to the
system operator in practice. With interval-by-interval scheduling, the outcome
from interval $(d,h)$ may inform a later reserve decision. With day-ahead
scheduling, all $H$ reserve pairs for day $d$ are selected before any outcome
from that day is observed, using forecasts, calendar features, and earlier
outcomes. Our theory accommodates these and other operational scheduling
arrangements by requiring each decision to use only the information available
when it is made. In the ERCOT implementation studied here, outcomes from day
$d$ are incorporated only after the operating day is complete. By then, the
day-ahead scheduling cutoff for day $d+1$ has passed, so these outcomes first
affect the reserve schedule for day $d+2$.

Day-ahead reserve scheduling is common in U.S. organized markets, although the
products, uncertainty horizons, and procurement rules vary by operator. ERCOT
establishes ancillary-service requirements by service and hour
before the Operating Day~\cite{ercotNodalProtocols2026}; CAISO procures
bidirectional Imbalance Reserves in the day-ahead market to address
day-ahead-to-real-time net-load uncertainty~\cite{caisoBPM2026}; and NYISO
co-optimizes 10- and 30-minute operating reserves in its Day-Ahead and
Real-Time Markets~\cite{nyisoASManual2026}. Together, these examples show that
advance reserve scheduling is a common operational practice implemented
through market-specific products and time scales.

Our framework can accommodate these market-specific configurations by allowing
each operator to select its scheduling horizon and temporal resolution and to
define the corresponding forecast, realization, reserve decision, and feedback
sequence. Each update then uses only the outcomes available when the affected
decision is made, and the feedback lag follows the operator's scheduling and
data-availability rules.
We use hourly day-ahead-to-real-time deviations as a representative baseline
because they provide a common, operationally relevant advance-scheduling
setting.

\subsection{Coverage Reliability}
\label{subsec:coverage}

Let $\varepsilon$ be a real-valued random variable representing the net-load
deviation and let $R^{\mathrm{up}},R^{\mathrm{down}}\geq0$ be the scheduled
reserves. For the reserve interval
$\cR:=[-R^{\mathrm{down}},R^{\mathrm{up}}]$, the coverage target is
\begin{equation}
    \Prob\!\left\{\varepsilon\in\cR\right\}\geq1-\alpha,
    \label{eq:coverage_target}
\end{equation}
where $\alpha$ is the target miscoverage; $\alpha=0.1$ corresponds to
90\% coverage. For a realized deviation $\varepsilon_{d,h}$, the directional
miss indicators are
$M^{\mathrm{short}}_{d,h}:=
\indicator{\varepsilon_{d,h}>r^{\mathrm{up}}_{d,h}}$ and
$M^{\mathrm{long}}_{d,h}:=
\indicator{\varepsilon_{d,h}<-r^{\mathrm{down}}_{d,h}}$.
These events are disjoint, so the total miss indicator is
$M_{d,h}:=M^{\mathrm{short}}_{d,h}+M^{\mathrm{long}}_{d,h}$.
Separate directional targets may satisfy
$\alpha^{\mathrm{short}}+\alpha^{\mathrm{long}}=\alpha$. Unless stated
otherwise, we use the symmetric allocation
$\alpha^{\mathrm{short}}=\alpha^{\mathrm{long}}=\alpha/2$.

\subsection{Expected Uncovered Deviation}
\label{subsec:eud_metric}

Coverage counts misses but not their magnitude. For the same random variable
$\varepsilon$ and reserves, expected uncovered deviation~(EUD) is
$\E[(\varepsilon-R^{\mathrm{up}})^+]$ for short exposures and
$\E[(-\varepsilon-R^{\mathrm{down}})^+]$ for long exposures,
where $(x)^+:=\max\{0,x\}$. EUD is the risk measure used here: it has the same
units as the deviation, is non-increasing in reserve, and reflects both
how often reserve is exceeded and how much deviation remains uncovered. For a
realized deviation, the corresponding losses are
$\ell^{\mathrm{short}}_{d,h}:=
(\varepsilon_{d,h}-r^{\mathrm{up}}_{d,h})^+$ and
$\ell^{\mathrm{long}}_{d,h}:=
(-\varepsilon_{d,h}-r^{\mathrm{down}}_{d,h})^+$.
The operator selects budgets
$\rho^{\mathrm{short}},\rho^{\mathrm{long}}>0$, which need not be equal because
short and long exposures can have different consequences.

\subsection{Offline Conformal Calibration}

Let $\widehat f(\bm{\phi})$ be a fitted quantile model that produces lower and
upper estimates of the net-load deviation. Conformalized Quantile
Regression~(CQR) provides a calibration layer on top of this model. The model is
first fitted on training data. CQR then evaluates its interval on held-out
calibration observations, measures how far those deviations fall outside the
interval, and adjusts the interval to attain the requested coverage
\cite{romano2019conformalized}. Its finite-sample marginal coverage guarantee is
distribution-free and valid for any fitted quantile model.

For EUD, a fitted model can likewise provide a family of condition-dependent
reserve estimates indexed by a parameter that controls reserve size and,
therefore, uncovered-deviation loss. Conformal Risk Control~(CRC) uses a
held-out calibration set to select this parameter so that the expected bounded
loss derived from EUD satisfies a specified budget
\cite{angelopoulos2024conformal}. The reserve family may be constructed from any
fitted model; in our framework, conditional quantile models provide the
condition-dependent base reserves. When the loss is bounded and non-increasing
as the calibrated reserve grows, the resulting expected-loss guarantee is
distribution-free.

Both methods may also be calibrated separately within prespecified groups,
yielding group-specific coverage or expected-loss guarantees
\cite{vovk2012conditional,angelopoulos2024conformal}. In the offline setting,
these guarantees require exchangeability between the fixed calibration set and
future observations within each group. This assumption is difficult to justify
in reserve scheduling because the distribution of net-load deviations can
change over time, as illustrated in \cref{fig:forecast_errors}. At the same
time, each realized deviation provides feedback for later reserve decisions.
This sequential feedback motivates the adaptive CQR and CRC schemes developed
next. They regulate realized time-average miscoverage or capped EUD loss
without requiring exchangeability.

%% file: Sections/03_online_cqr.tex
\section{Adaptive Calibration for CQR-Based Reserve Sizing}
\label{sec:cqr}

To maintain coverage as operating conditions evolve, we construct reserve
intervals using Adaptive Conformal Inference~(ACI) with CQR conformity
scores~\cite{gibbs2021adaptive}. The fitted conditional quantiles make the
intervals depend on operating conditions, while ACI uses observed misses to
update the miscoverage level that selects the conformal score quantile. The
resulting finite-horizon identity follows directly from the ACI recursion and
holds for arbitrary sequences of features, fitted models, and outcomes.

\subsection{ACI-CQR for Total Miscoverage}
\label{subsec:marginal_cqr}

Let $q^{\tau}(\bm{\phi})$ denote the conditional $\tau$-quantile of the net-load
deviation $\varepsilon$ given the operating features $\bm{\phi}$, and let
$\widehat q^{\tau}$ denote its fitted estimate. The corresponding lower and
upper quantile estimates for interval $(d,h)$ are
\begin{equation}
    \widehat q^{\mathrm{low}}_{d,h}
    :=\widehat q^{\alpha/2}_{d,h}(\bm{\phi}_{d,h}),
    \qquad
    \widehat q^{\mathrm{high}}_{d,h}
    :=\widehat q^{1-\alpha/2}_{d,h}(\bm{\phi}_{d,h}).
    \label{eq:cqr_base_quantiles}
\end{equation}
The corresponding CQR conformity score is
\begin{equation}
    S_{d,h}
    :=\max\left\{
        \widehat q^{\mathrm{low}}_{d,h}-\varepsilon_{d,h},
        \varepsilon_{d,h}-\widehat q^{\mathrm{high}}_{d,h}
    \right\}.
    \label{eq:online_aci_cqr_score}
\end{equation}
Adaptive calibration can regulate either one pooled sequence or separate
sequences associated with conditions known when reserves are scheduled. Let a
prespecified rule $G$ assign each interval $(d,h)$ to a calibration group
$g=G(d,h,\bm{\phi}_{d,h})\in\cG$. For any finite grouping rule $G$, the results
below apply separately to the sequence of observations in each group.

We consider two choices. Under marginal calibration, $G$ assigns every interval
to the same group, so all observations are pooled. Under hour-specific
calibration, $\cG=\cH$ and $G(d,h,\bm{\phi}_{d,h})=h$, so observations from each
hour form a separate calibration group. This hour-specific grouping follows the
scheduling structure considered here: the operating hour is known when reserves
are scheduled, and ERCOT procures ancillary services in the Day-Ahead Market by
service and hour of the Operating Day~\cite{ercotNodalProtocols2026}.
Calibrating by hour therefore evaluates reliability at the same temporal
resolution as reserve procurement. The theory also applies to any finite
grouping formed from information known when reserves are scheduled. Possible
alternatives include season and prespecified ranges of forecast load, net load,
renewable production, or ramp magnitude. Groups based on a continuous variable
require breakpoints and enough observations in every group. Finer groupings
provide more localized reliability. When a group receives fewer updates, its
finite-horizon residual term is correspondingly larger.

For $p\in(0,1]$, let $\widehat Q_{d,g}(p)$ denote the empirical $p$-quantile of
the past conformity scores available for group $g$ before the day-$d$ reserve
decisions. A score is included only after its corresponding outcome is
observed. For a fixed target total miscoverage $\alpha$, let
$\alpha_{d,g}\in\sR$ denote the time-varying ACI miscoverage state for group $g$
on day $d$. The same state is used to construct all applicable day-$d$
intervals. Its corresponding coverage level is $1-\alpha_{d,g}$, so ACI uses
the empirical score quantile
$\widehat Q_{d,g}(1-\alpha_{d,g})$. As in ACI, we extend the empirical quantile
outside $(0,1]$ as
\begin{equation}
    \widehat Q_{d,g}(p)
    =
    \begin{cases}
        -\infty,&p\leq0,\\
        +\infty,&p>1.
    \end{cases}
    \label{eq:aci_quantile_endpoints}
\end{equation}
The ACI-CQR interval is then
\begin{equation}
    C^{\mathrm{ACI}}_{d,h}
    =\left[
        \widehat q^{\mathrm{low}}_{d,h}-\widehat Q_{d,g}(1-\alpha_{d,g}),
        \widehat q^{\mathrm{high}}_{d,h}+\widehat Q_{d,g}(1-\alpha_{d,g})
    \right].
    \label{eq:online_aci_cqr_interval}
\end{equation}
After the outcomes for day $d$ are observed, define the ACI-CQR miss indicator
$M_{d,h}^{\mathrm{ACI}}:=
\indicator{\varepsilon_{d,h}\notin C^{\mathrm{ACI}}_{d,h}}$. ACI uses one
feedback value for each calibration group. Under marginal calibration, all
$H$ hours belong to one group, denoted by $g=\mathrm{all}$, and its feedback is
the daily miss rate
\begin{equation}
    M_{d,\mathrm{all}}^{\mathrm{ACI}}
    :=\frac{1}{H}\sum_{h\in\cH}M_{d,h}^{\mathrm{ACI}}.
    \label{eq:marginal_daily_miscoverage}
\end{equation}
Under hour-specific calibration, $g=h$ and
$M_{d,g}^{\mathrm{ACI}}:=M_{d,h}^{\mathrm{ACI}}$. Thus,
$M_{d,g}^{\mathrm{ACI}}\in[0,1]$ in either case.

Once the group's outcomes are observed, the state for the first eligible
schedule is updated using the group-level miss rate as
\begin{equation}
    \alpha_{d+2,g}
    =\alpha_{d+1,g}+\gamma_g\left(\alpha-M_{d,g}^{\mathrm{ACI}}\right),
    \label{eq:aci_cqr_update}
\end{equation}
where $\gamma_g>0$ is the ACI step size for group $g$ and determines how strongly the adaptive state responds to
the difference between the target and observed miss rates. This update may
temporarily place $\alpha_{d,g}$ outside $[0,1]$; the endpoint convention
in~\cref{eq:aci_quantile_endpoints} keeps every later interval defined. The new
scores are also added to the score collection used for day $d+2$ and later
decisions. The day-$(d+1)$ state and score collection have already been fixed
at the day-ahead cutoff, so day-$d$ outcomes cannot affect them.

\begin{theorem}[Finite-horizon total-miscoverage calibration]
    \label{theorem:online_aci_cqr}
    For either marginal or hour-specific calibration, fix a group $g$ and
    suppose $\alpha_{1,g},\alpha_{2,g}\in[0,1]$ and the score collection is nonempty whenever
    an interval is constructed. If the intervals are constructed using
    \cref{eq:aci_quantile_endpoints,eq:online_aci_cqr_interval} and the state
    is updated according to~\cref{eq:aci_cqr_update}, then, for every realized
    sequence and every $D\geq1$,
    \begin{equation}
        \frac{1}{D}\sum_{d=1}^{D}M_{d,g}^{\mathrm{ACI}}
        =\alpha+
        \frac{\alpha_{2,g}-\alpha_{D+2,g}}{\gamma_gD}.
        \label{eq:aci_cqr_identity}
    \end{equation}
    The states also satisfy
    $\alpha_{d,g}\in[-2\gamma_g(1-\alpha),1+2\gamma_g\alpha]$ for
    $d=1,\ldots,D+2$, and therefore
    \begin{equation}
        \left|
        \frac{1}{D}\sum_{d=1}^{D}M_{d,g}^{\mathrm{ACI}}-\alpha
        \right|
        \leq
        \frac{\max\{\alpha_{2,g},1-\alpha_{2,g}\}+2\gamma_g}
        {\gamma_gD}.
        \label{eq:aci_cqr_bound}
    \end{equation}
\end{theorem}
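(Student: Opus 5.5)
The plan has three steps: get the identity by telescoping the update rule, bound the states with an endpoint argument, and then combine the two.

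\emph{Identity.} Sum \cref{eq:aci_cqr_update} over $d=1,\ldots,D$. The left-hand sides run over $\alpha_{3,g},\ldots,\alpha_{D+2,g}$ and the first right-hand terms over $\alpha_{2,g},\ldots,\alpha_{D+1,g}$, so the sum telescopes to
\[
\alpha_{D+2,g}-\alpha_{2,g}=\gamma_g\Bigl(D\alpha-\sum_{d=1}^{D}M^{\mathrm{ACI}}_{d,g}\Bigr).
\]
Dividing by $\gamma_g D$ and rearranging gives \cref{eq:aci_cqr_identity}. This step uses only the recursion, so it holds for every realized sequence. The two interleaved chains (odd and even days) are both handled, because the update always links $d+1$ to $d+2$.

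\emph{State bounds.} Argue by induction on $d$ that $\alpha_{d,g}\in[-2\gamma_g(1-\alpha),\,1+2\gamma_g\alpha]$. The base cases $d=1,2$ hold because both states lie in $[0,1]$. The key observation comes from \cref{eq:aci_quantile_endpoints}:
\begin{itemize}
\item if $\alpha_{d,g}\le 0$, every interval constructed on day $d$ is $(-\infty,\infty)$, so $M^{\mathrm{ACI}}_{d,g}=0$;
\item if $\alpha_{d,g}>1$, every interval is empty (lower endpoint $+\infty$), so $M^{\mathrm{ACI}}_{d,g}=1$.
\end{itemize}
This holds for both the marginal daily rate and the hourly indicator, since every interval in the group uses the same state and, for the marginal group, $M^{\mathrm{ACI}}_{d,\mathrm{all}}$ averages identical indicators.

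The obstacle is the two-day lag. The increment applied to $\alpha_{d+2,g}$ is driven by the state $\alpha_{d,g}$, not by $\alpha_{d+1,g}$. So on the upper side I will use the fact that each step changes the state by at most $\gamma_g\alpha$ upward or $\gamma_g(1-\alpha)$ downward.

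Suppose $\alpha_{d+2,g}>1$ with $\alpha_{d+2,g}>\alpha_{d+1,g}$. Then $M_{d,g}<\alpha\le 1$, so $\alpha_{d,g}\le1$. Now split on $\alpha_{d+1,g}$:
\begin{itemize}
\item If $\alpha_{d+1,g}\le 1+\gamma_g\alpha$, then $\alpha_{d+2,g}\le 1+2\gamma_g\alpha$ directly.
\item If $\alpha_{d+1,g}>1+\gamma_g\alpha$, then $\alpha_{d+1,g}>1$. Since $\alpha_{d+1,g}$ was itself reached by one step of size at most $\gamma_g\alpha$ from $\alpha_{d,g}\le 1$, we get $\alpha_{d+1,g}\le 1+\gamma_g\alpha$, a contradiction.
\end{itemize}
Hence the state can exceed $1$ by at most two upward steps, because any state above $1$ at day $d$ forces a downward update two days later. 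The lower side is symmetric, with a step size of $\gamma_g(1-\alpha)$. I expect the careful bookkeeping of this lagged induction, tracking which earlier state controls each increment, to be the hardest part. If it fails, I would fall back to bounding the odd and even subsequences separately.

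\emph{Bound.} Combine the identity with the state bounds. Since $\alpha_{D+2,g}$ lies in the interval above,
\[
\alpha_{2,g}-\alpha_{D+2,g}\le \alpha_{2,g}+2\gamma_g(1-\alpha)
\quad\text{and}\quad
\alpha_{D+2,g}-\alpha_{2,g}\le 1-\alpha_{2,g}+2\gamma_g\alpha .
\]
Both right-hand sides are at most $\max\{\alpha_{2,g},1-\alpha_{2,g}\}+2\gamma_g$. Dividing by $\gamma_g D$ gives \cref{eq:aci_cqr_bound}.
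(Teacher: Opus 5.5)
Your proposal is correct and takes essentially the same route as the paper's proof. You telescope the delayed recursion to get the identity, then use the same two-step lag argument with a case split on whether $\alpha_{d+1,g}$ exceeds $1+\gamma_g\alpha$ (respectively falls below $-\gamma_g(1-\alpha)$), which forces the day-$d$ decision state outside $[0,1]$ and hence an inward update, and you finish with the same combination step.

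One small slip: the quantile level is $1-\alpha_{d,g}$, so the endpoint convention gives $\sR$ only for $\alpha_{d,g}<0$ (at $\alpha_{d,g}=0$ the quantile is the finite maximum past score) and the empty set already for $\alpha_{d,g}\ge 1$. Your lower-side argument still goes through unchanged if you use $\alpha_{d,g}\ge 0$ in place of $\alpha_{d,g}>0$.
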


\Cref{theorem:online_aci_cqr} bounds the difference between the realized
time-average total miscoverage and the target $\alpha$ within each calibration
group. This difference is at most a term that
decreases as $1/D$. The guarantee holds for any realized sequence, so temporal
dependence and changes in net-load deviations do not invalidate it. Under
marginal calibration, the left-hand side of~\cref{eq:aci_cqr_identity} is the
pooled miscoverage over all $DH$ intervals. Under hour-specific calibration,
the theorem controls miscoverage for each hour $h$ across the $D$ days.
This is a time-average calibration guarantee, not a per-day or conditional
coverage guarantee. For another finite grouping, the same result follows by
indexing the feedback updates within each group and replacing $D$ with the
number of updates received by that group.

\subsection{Directional ACI-CQR}
\label{subsec:directional_aci_cqr}

ACI-CQR controls total interval miscoverage but does not determine how misses
are divided between the two tails. Consequently, the total miscoverage target
may be met even when misses are concentrated in one direction. To calibrate the
two directions separately, we extend ACI-CQR by applying separate ACI updates
to the lower- and upper-tail conformity scores. We refer to this construction
as Directional ACI-CQR:
\begin{equation}
    S^{\mathrm{long}}_{d,h}
    :=\widehat q^{\mathrm{low}}_{d,h}-\varepsilon_{d,h},
    \qquad
    S^{\mathrm{short}}_{d,h}
    :=\varepsilon_{d,h}-\widehat q^{\mathrm{high}}_{d,h}.
    \label{eq:directional_aci_cqr_scores}
\end{equation}
We maintain separate score collections and adaptive miscoverage states
$\alpha^{\mathrm{long}}_{d,g}$ and $\alpha^{\mathrm{short}}_{d,g}$. The
resulting interval is
\begin{equation}
    \begin{aligned}
        C^{\mathrm{D\text{-}ACI}}_{d,h}
        &:=\left[C^{\mathrm{low}}_{d,h},C^{\mathrm{high}}_{d,h}\right],\\
        C^{\mathrm{low}}_{d,h}
        &:=\widehat q^{\mathrm{low}}_{d,h}
          -\widehat Q^{\mathrm{long}}_{d,g}
            (1-\alpha^{\mathrm{long}}_{d,g}),\\
        C^{\mathrm{high}}_{d,h}
        &:=\widehat q^{\mathrm{high}}_{d,h}
          +\widehat Q^{\mathrm{short}}_{d,g}
            (1-\alpha^{\mathrm{short}}_{d,g}).
    \end{aligned}
    \label{eq:directional_aci_cqr_interval}
\end{equation}
Let $M^{\mathrm{long}}_{d,h}$ indicate an observation below the lower endpoint
of \cref{eq:directional_aci_cqr_interval}; $M^{\mathrm{short}}_{d,h}$ indicates
an observation above its upper endpoint. Their marginal daily or hour-specific
group losses are formed as in \cref{eq:marginal_daily_miscoverage}. Let
$\alpha^{\mathrm{short}},\alpha^{\mathrm{long}}\in(0,1)$ be the selected
miscoverage targets. They may be allocated asymmetrically. To
target total miscoverage $\alpha$, they can be selected so that
$\alpha^{\mathrm{short}}+\alpha^{\mathrm{long}}=\alpha$; the common symmetric
convention used in \cref{eq:cqr_base_quantiles} sets both targets to
$\alpha/2$. The directional states are updated as
\begin{equation}
    \alpha^{j}_{d+2,g}
    =\alpha^{j}_{d+1,g}+\gamma^{j}_{g}\left(\alpha^{j}-M^{j}_{d,g}\right),
    \qquad j\in\{\mathrm{short},\mathrm{long}\}.
    \label{eq:directional_aci_update}
\end{equation}
where $\gamma^{j}_{g}>0$ is the step size for exposure direction $j$ in group
$g$.

\begin{theorem}[Finite-horizon directional-miscoverage calibration]
    \label{theorem:directional_aci_cqr}
    Fix a calibration group $g$. Suppose the initial directional states satisfy
    $\alpha^{j}_{1,g},\alpha^{j}_{2,g}\in[0,1]$ for
    $j\in\{\mathrm{short},\mathrm{long}\}$ and the corresponding score
    collections are nonempty whenever an interval is constructed. If the
    intervals are constructed using
    \cref{eq:aci_quantile_endpoints,eq:directional_aci_cqr_interval} and the
    states are updated according to~\cref{eq:directional_aci_update}, then, for
    every realized sequence and every $D\geq1$,
    \begin{equation}
        \frac{1}{D}\sum_{d=1}^{D}M^{j}_{d,g}
        =\alpha^{j}
        +\frac{\alpha^{j}_{2,g}-\alpha^{j}_{D+2,g}}{\gamma^{j}_{g}D},
        \qquad j\in\{\mathrm{short},\mathrm{long}\},
        \label{eq:directional_aci_identity}
    \end{equation}
    Each directional state remains bounded as
    $\alpha^{j}_{d,g}\in
    [-2\gamma^{j}_{g}(1-\alpha^j),1+2\gamma^{j}_{g}\alpha^j]$ for
    $d=1,\ldots,D+2$. Consequently,
    \begin{equation}
        \left|
        \frac{1}{D}\sum_{d=1}^{D}M^{j}_{d,g}-\alpha^{j}
        \right|
        \leq
        \frac{\max\{\alpha^{j}_{2,g},1-\alpha^{j}_{2,g}\}
        +2\gamma^{j}_{g}}
        {\gamma^{j}_{g}D}.
        \label{eq:directional_aci_bound}
    \end{equation}
    If $M^{\mathrm{D\text{-}ACI}}_{d,g}$ denotes total miscoverage
    of~\cref{eq:directional_aci_cqr_interval}, then
    \begin{equation}
        \frac{1}{D}\sum_{d=1}^{D}M^{\mathrm{D\text{-}ACI}}_{d,g}
        \leq
        \frac{1}{D}\sum_{d=1}^{D}
        \left(M^{\mathrm{long}}_{d,g}+M^{\mathrm{short}}_{d,g}\right).
        \label{eq:directional_union_bound}
    \end{equation}
    Thus, the asymptotic upper limit of total time-average miscoverage is at
    most $\alpha^{\mathrm{long}}+\alpha^{\mathrm{short}}$ and, when the targets
    sum to $\alpha$, at most $\alpha$. The inequality
    in~\cref{eq:directional_union_bound} is an equality whenever the calibrated
    intervals are nonempty.
\end{theorem}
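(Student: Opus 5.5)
The plan is to follow the proof of \cref{theorem:online_aci_cqr} separately for each direction $j\in\{\mathrm{short},\mathrm{long}\}$, and then add a short set-theoretic argument for the union bound.

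\emph{Step 1: telescoping identity.} Summing the update~\cref{eq:directional_aci_update} over $d=1,\ldots,D$ gives
\[
\alpha^j_{D+2,g}-\alpha^j_{2,g}=\gamma^j_g\Bigl(D\alpha^j-\sum_{d=1}^D M^j_{d,g}\Bigr).
\]
Dividing by $\gamma^j_g D$ yields~\cref{eq:directional_aci_identity}. This step is pure algebra and holds for any realized sequence. The only facts it uses are that $M^j_{d,g}$ is defined for every $d$ and that the day-$(d+2)$ state depends on day-$d$ feedback, both of which are built into the recursion.

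\emph{Step 2: state bounds (the main step).} I would prove by induction on $d$ that
\[
\alpha^j_{d,g}\in\bigl[-2\gamma^j_g(1-\alpha^j),\,1+2\gamma^j_g\alpha^j\bigr].
\]
The endpoint convention~\cref{eq:aci_quantile_endpoints} drives the argument. If $\alpha^j_{d,g}\le 0$, the quantile is $+\infty$, so every interval in group $g$ for direction $j$ has an infinite endpoint on that side. Hence $M^j_{d,h}=0$ for all $h$, so $M^j_{d,g}=0$. Similarly, if $\alpha^j_{d,g}>1$, the quantile is $-\infty$, the endpoint is infinite inward, every observation is a directional miss, and $M^j_{d,g}=1$.

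The obstacle is the two-day lag. The update for $\alpha^j_{d+2,g}$ adds a term driven by $M^j_{d,g}$ to $\alpha^j_{d+1,g}$, not to $\alpha^j_{d,g}$. I would therefore prove the pair bound for $(\alpha^j_{d,g},\alpha^j_{d+1,g})$ jointly. Note that $\alpha^j_{d+1,g}-\alpha^j_{d,g}$ equals a single increment $\gamma^j_g(\alpha^j-M)$, so consecutive states differ by at most $\gamma^j_g\max\{\alpha^j,1-\alpha^j\}$.

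Consider the upper side. Suppose $\alpha^j_{d+2,g}>1$. If $\alpha^j_{d,g}>1$, then $M^j_{d,g}=1$ and the step is negative, so $\alpha^j_{d+2,g}<\alpha^j_{d+1,g}$, which is controlled by induction. Otherwise $\alpha^j_{d,g}\le1$. Two increments of size at most $\gamma^j_g\alpha^j$ separate $\alpha^j_{d,g}$ from $\alpha^j_{d+2,g}$, so $\alpha^j_{d+2,g}\le 1+2\gamma^j_g\alpha^j$. The lower side is symmetric, with $-2\gamma^j_g(1-\alpha^j)$. The initial conditions $\alpha^j_{1,g},\alpha^j_{2,g}\in[0,1]$ start the induction.

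\emph{Step 3: the bound~\cref{eq:directional_aci_bound}.} Insert the state bounds into the identity. This gives
\[
|\alpha^j_{2,g}-\alpha^j_{D+2,g}|\le \max\bigl\{1+2\gamma^j_g\alpha^j-\alpha^j_{2,g},\ \alpha^j_{2,g}+2\gamma^j_g(1-\alpha^j)\bigr\}.
\]
This is at most $\max\{\alpha^j_{2,g},1-\alpha^j_{2,g}\}+2\gamma^j_g$.

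\emph{Step 4: union bound.} Each interval $(d,h)$ is missed exactly when $\varepsilon_{d,h}<C^{\mathrm{low}}_{d,h}$ or $\varepsilon_{d,h}>C^{\mathrm{high}}_{d,h}$. Therefore $\mathbbm{1}\{\varepsilon\notin C\}\le M^{\mathrm{long}}+M^{\mathrm{short}}$ pointwise. Averaging over the hours in the group and over days gives~\cref{eq:directional_union_bound}.

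The two events can co-occur only if $C^{\mathrm{low}}_{d,h}>\varepsilon_{d,h}>C^{\mathrm{high}}_{d,h}$, which requires an empty interval. So when the intervals are nonempty the events are disjoint and equality holds. Taking $\limsup_{D\to\infty}$ and using~\cref{eq:directional_aci_bound} for both directions gives the asymptotic bound $\alpha^{\mathrm{long}}+\alpha^{\mathrm{short}}$. This equals $\alpha$ when the targets sum to $\alpha$.
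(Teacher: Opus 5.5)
Your proposal is correct and follows the paper's proof: you telescope the delayed update, run a two-step induction on the states driven by the endpoint convention (you split on the decision state $\alpha^j_{d,g}$ where the paper splits on $\alpha^j_{d+1,g}$, which is equivalent), and finish with the pointwise union and disjointness argument. Two small fixes: at $\alpha^j_{d,g}=0$ the level $1-\alpha^j_{d,g}=1$ lies in $(0,1]$, so $\widehat Q$ is the finite empirical maximum and only strict negativity forces $M^j_{d,g}=0$ (put $\alpha^j_{d,g}=0$ in your ``otherwise'' branch); and since $\alpha^j_{2,g}-\alpha^j_{1,g}$ is not an update increment, the base case $\alpha^j_{3,g}$ must be bounded directly from $\alpha^j_{2,g}\in[0,1]$, as the paper does.
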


\Cref{theorem:directional_aci_cqr} proves separate finite-horizon bounds on
short- and long-exposure miscoverage within each calibration group.
For each direction, the difference between the realized average miss rate and
its selected target is bounded by a term that decreases as $1/D$, even when the
sequence of operating conditions and deviations changes over time. An
acceptable total coverage rate can still conceal persistent misses in one
direction; the separate guarantees expose and limit this imbalance within each
group. The short- and long-exposure miscoverage targets may differ to reflect
their different operational consequences. Setting the two targets equal does
not imply symmetric upward and downward reserves: the fitted quantile bounds
and score distributions may have different scales, so the reserve corrections
can differ.

\subsection{Mapping Intervals to Reserves}
\label{subsec:cqr_interval_to_reserves}

The intervals constructed above are expressed in terms of signed net-load
deviations. Let $C_{d,h}$ denote either $C^{\mathrm{ACI}}_{d,h}$ or
$C^{\mathrm{D\text{-}ACI}}_{d,h}$. For the selected construction, denote its
lower and upper endpoints by $C^{\mathrm{low}}_{d,h}$ and
$C^{\mathrm{high}}_{d,h}$, so that
$C_{d,h}=[C^{\mathrm{low}}_{d,h},C^{\mathrm{high}}_{d,h}]$. The corresponding
reserve quantities are
\begin{equation}
    r^{\mathrm{up}}_{d,h}
    :=\max\{0,C^{\mathrm{high}}_{d,h}\},
    \qquad
    r^{\mathrm{down}}_{d,h}
    :=\max\{0,-C^{\mathrm{low}}_{d,h}\}.
    \label{eq:cqr_reserves}
\end{equation}
The resulting operational interval contains the conformal interval,
$C_{d,h}\subseteq
[-r^{\mathrm{down}}_{d,h},r^{\mathrm{up}}_{d,h}]$. Consequently,
\begin{equation}
    \indicator{\varepsilon_{d,h}\notin
        [-r^{\mathrm{down}}_{d,h},r^{\mathrm{up}}_{d,h}]}
    \leq
    \indicator{\varepsilon_{d,h}\notin C_{d,h}}.
    \label{eq:reserve_coverage_transfer}
\end{equation}
Equation~\eqref{eq:reserve_coverage_transfer} transfers the upper bound on
time-average miscoverage from the conformal interval to the operational reserve
interval. When the mapping extends the interval to include zero, the operational
interval retains this guarantee but can have miscoverage below the target (better performance). If
the conformal interval already contains zero, \cref{eq:cqr_reserves} leaves it
unchanged, and the two intervals have the same miscoverage.

%% file: Sections/04_online_crc.tex
\section{Adaptive Calibration for CRC-Based Reserve Sizing}
\label{sec:crc}

Coverage measures how often reserve is insufficient, whereas EUD also reflects
how much deviation remains uncovered. We use Rolling Risk Control~(Rolling RC)
\cite{feldman2023rolling} to adapt upward and downward reserves so that their
directional EUD remains within operator-specified budgets. A quantile model,
tuned separately for each direction, provides the baseline reserve. Rolling RC
then adjusts this baseline through a calibration parameter updated from the
observed uncovered-deviation loss.
In the day-ahead implementation studied here, the calibration parameter is
updated once after each completed operating day and affects the first reserve
schedule formed after that feedback becomes available.

\subsection{Reserve Construction}
\label{subsec:eud}

We apply Rolling RC separately to the short and long exposure directions
introduced in~\cref{subsec:problem}. To express both directions using common
notation, define the directional deviation magnitudes
\begin{equation}
    Z^{\mathrm{short}}_{d,h}:=\max\{\varepsilon_{d,h},0\},
    \qquad
    Z^{\mathrm{long}}_{d,h}:=\max\{-\varepsilon_{d,h},0\},
    \label{eq:directional_targets}
\end{equation}
and denote the corresponding reserves by
$r^{\mathrm{short}}_{d,h}:=r^{\mathrm{up}}_{d,h}$ and
$r^{\mathrm{long}}_{d,h}:=r^{\mathrm{down}}_{d,h}$.
With $j\in\{\mathrm{short},\mathrm{long}\}$, the realized uncovered deviation
in either direction is
$(Z^j_{d,h}-r^j_{d,h})^+$, consistent with
\cref{subsec:eud_metric}.

Rolling RC operates on a base quantile model that provides a baseline reserve
for each direction:
\begin{equation}
    \widehat r^j_{d,h}
    :=\max\!\left\{0,
        \widehat q^{j,\tau_j}_{d,h}(\bm{\phi}_{d,h})\right\},
    \label{eq:base_directional_reserve}
\end{equation}
where $\widehat q^{j,\tau_j}_{d,h}(\bm{\phi}_{d,h})$ predicts the
$\tau_j$-quantile of the directional deviation magnitude under the current
operating conditions. The level $\tau_j$ determines the baseline reserve and
may be tuned on validation data so that its mean uncovered deviation is close
to the EUD budget $\rho^j$. This follows
the validation-based parameter-selection principle used for monotone losses in
conformal risk control~\cite{angelopoulos2024conformal}. The choice of
$\tau_j$ affects the baseline reserve and the subsequent adjustment required
from Rolling RC, but not the form of its guarantee.

Rolling RC assigns a separate calibration parameter to each direction and
group. Using the grouping rule $G$ introduced in~\cref{subsec:marginal_cqr},
interval $(d,h)$ belongs to group $g=G(d,h,\bm{\phi}_{d,h})$, and
$\theta^j_{d,g}\in\sR$ denotes the parameter used to determine the day-$d$ reserve
in direction $j$. The adjusted reserve is
\begin{equation}
    r^j_{d,h}(\theta^j_{d,g})
    :=\max\left\{0,
        \widehat r^j_{d,h}+\varphi_j(\theta^j_{d,g})
    \right\},
    \label{eq:online_reserve_decision}
\end{equation}
where $\varphi_j$ is a nondecreasing stretching function that maps the calibration
parameter to a reserve adjustment; a larger $\theta^j_{d,g}$ cannot reduce the
reserve. No stretching corresponds to
$\varphi_j(\theta)=\theta$; exponential and error-adaptive functions may also be
used~\cite{feldman2023rolling}.

\subsection{Sequential Update and Finite-Horizon Guarantee}
\label{subsec:rolling_rc}

Rolling RC updates the calibration parameter $\theta^j_{d,g}$ once after all
deviations for operating day $d$ are observed, and its guarantee requires a
bounded loss~\cite{feldman2023rolling}. For
$j\in\{\mathrm{short},\mathrm{long}\}$, let $B^j_g$ be the fixed loss cap for
direction $j$ and group $g$, selected before the outcomes are known and
satisfying $\rho^j<B^j_g<\infty$. The loss used in the update is the capped
uncovered deviation
\begin{equation}
    L^j_{d,h}(\theta)
    :=\min\left\{
        \left(Z^j_{d,h}-r^j_{d,h}(\theta)\right)^+,
        B^j_g
    \right\}.
    \label{eq:bounded_eud_loss}
\end{equation}
The cap is applied only when evaluating the loss; it does not limit the reserve
in~\cref{eq:online_reserve_decision} or the physical uncovered deviation.

After the outcomes for day $d$ are observed, the interval losses are combined
according to the selected grouping. Under marginal calibration, all hours
belong to $g=\mathrm{all}$ and the loss is the daily mean
\begin{equation}
    L^j_{d,\mathrm{all}}(\theta)
    :=\frac{1}{H}\sum_{h\in\cH}L^j_{d,h}(\theta).
    \label{eq:daily_marginal_eud}
\end{equation}
Under hour-specific calibration, each hour is treated separately, so $g=h$ and
$L^j_{d,g}(\theta):=L^j_{d,h}(\theta)$. In either case,
$0\leq L^j_{d,g}(\theta)\leq B^j_g$. The objective for each direction
and group is to keep the time-average loss within the EUD budget:
\begin{equation}
    \limsup_{D\rightarrow\infty}
    \frac{1}{D}\sum_{d=1}^{D}
        L^j_{d,g}(\theta^j_{d,g})
    \leq\rho^j.
    \label{eq:risk_sizing}
\end{equation}
Once $L^j_{d,g}$ is available, the calibration parameter is updated as
\begin{equation}
    \begin{aligned}
        \theta^j_{d+2,g}
        &=\theta^j_{d+1,g}
        +\gamma^j_g\left(
            L^j_{d,g}(\theta^j_{d,g})-\rho^j
        \right),\\
        &\hspace{26mm}j\in\{\mathrm{short},\mathrm{long}\},
    \end{aligned}
    \label{eq:online_crc}
\end{equation}
where $\gamma^j_g>0$ is the step size. A loss above the budget increases the
parameter and the reserve adjustment for subsequent days. A loss below the
budget decreases the parameter, allowing a smaller subsequent adjustment. Thus,
under the stated day-ahead cutoff, outcomes from day $d$ first affect the
reserves issued for day $d+2$; the day-$(d+1)$ decision is already fixed. The
condition $B^j_g>\rho^j$ ensures that the capped loss can exceed the budget;
otherwise, the update could never increase the calibration parameter.

\begin{theorem}[Finite-horizon Rolling RC]
    \label{theorem:rolling_rc}
    Fix a group $g$ and a direction
    $j\in\{\mathrm{short},\mathrm{long}\}$. Suppose there is a finite value
    $\theta^j_{\mathrm{safe},g}$ such that
    \begin{equation}
        L^j_{d,g}(\theta)\leq\rho^j
        \quad\text{for all }d
        \text{ and }\theta\geq\theta^j_{\mathrm{safe},g}.
        \label{eq:rrc_safe_state}
    \end{equation}
    Then, for every realized sequence and every $D\geq1$, the update
    in~\cref{eq:online_crc} satisfies
    \begin{equation}
        \frac{1}{D}\sum_{d=1}^{D}
        L^j_{d,g}(\theta^j_{d,g})
        =\rho^j+
        \frac{\theta^j_{D+2,g}-\theta^j_{2,g}}
        {\gamma^j_gD}.
        \label{eq:rrc_identity}
    \end{equation}
    For every $d\geq1$, the state also satisfies
    \begin{equation}
        \begin{aligned}
        \theta^j_{d,g}&\leq\Theta^j_g,\\
        \Theta^j_g&:=\max\left\{
        \begin{array}{l}
            \theta^j_{1,g},\quad
            \theta^j_{2,g}+\gamma^j_g(B^j_g-\rho^j),\\
            \theta^j_{\mathrm{safe},g}
            +2\gamma^j_g(B^j_g-\rho^j)
        \end{array}
        \right\}.
        \end{aligned}
        \label{eq:rrc_state_bound}
    \end{equation}
    and therefore
    \begin{equation}
        \frac{1}{D}\sum_{d=1}^{D}
        L^j_{d,g}(\theta^j_{d,g})
        \leq\rho^j+
        \frac{\Theta^j_g-\theta^j_{2,g}}
        {\gamma^j_gD}.
        \label{eq:rrc_finite_bound}
    \end{equation}
    Consequently, the time-average objective in~\cref{eq:risk_sizing} holds.
\end{theorem}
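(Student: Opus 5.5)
The plan has three parts: derive the identity by telescoping the recursion in \cref{eq:online_crc}, bound the state by an induction that accounts for the two-day lag, and then combine the two.

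\textbf{Identity.} Abbreviate $\theta_d:=\theta^j_{d,g}$, $\gamma:=\gamma^j_g$, $B:=B^j_g$, $\rho:=\rho^j$ and $L_d:=L^j_{d,g}(\theta^j_{d,g})$. Summing \cref{eq:online_crc} over $d=1,\ldots,D$ telescopes the left-hand side:
\begin{equation*}
    \theta_{D+2}-\theta_{2}=\gamma\sum_{d=1}^{D}\left(L_d-\rho\right).
\end{equation*}
Dividing by $\gamma D$ gives \cref{eq:rrc_identity}. This step uses no assumption on the sequence. It is exactly parallel to the telescoping behind \cref{theorem:online_aci_cqr}, with the sign reversed because larger $\theta$ means more reserve.

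\textbf{State bound.} I will prove $\theta_d\le\Theta:=\Theta^j_g$ by strong induction on $d$. Two facts drive the argument:
\begin{itemize}
    \item Since $0\le L_d\le B$, every increment satisfies $\theta_{d+2}-\theta_{d+1}\le\gamma(B-\rho)$.
    \item By \cref{eq:rrc_safe_state}, if $\theta_d\ge\theta_{\mathrm{safe}}$ then $L_d\le\rho$, so $\theta_{d+2}\le\theta_{d+1}$.
\end{itemize}

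The base cases are immediate: $\theta_1\le\Theta$ by definition, and $\theta_2\le\theta_2+\gamma(B-\rho)\le\Theta$ because $B>\rho$. Also $\theta_3\le\theta_2+\gamma(B-\rho)\le\Theta$ directly from the first fact.

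For $d\ge2$, assume the bound holds up to index $d+1$ and split on $\theta_d$:
\begin{itemize}
    \item If $\theta_d\ge\theta_{\mathrm{safe}}$, the increment is nonpositive, so $\theta_{d+2}\le\theta_{d+1}\le\Theta$.
    \item If $\theta_d<\theta_{\mathrm{safe}}$, then $\theta_{d+1}=\theta_d+\gamma(L_{d-1}-\rho)<\theta_{\mathrm{safe}}+\gamma(B-\rho)$. One more bounded increment gives $\theta_{d+2}<\theta_{\mathrm{safe}}+2\gamma(B-\rho)\le\Theta$.
\end{itemize}

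This delay bookkeeping is the main obstacle. Because of the two-day lag, the state can overshoot the safe level by up to two increments before the safe condition takes effect. This is why the term $2\gamma(B-\rho)$ and the separate terms for $\theta_1,\theta_2$ appear in $\Theta^j_g$. I expect only the bound on $L$ and the safe condition are needed, not the monotonicity of $\varphi_j$.

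\textbf{Conclusion.} Substituting $\theta_{D+2}\le\Theta$ into \cref{eq:rrc_identity} yields \cref{eq:rrc_finite_bound}. The right-hand side of \cref{eq:rrc_finite_bound} is $\rho$ plus a constant divided by $\gamma D$. Taking $\limsup_{D\to\infty}$ therefore gives \cref{eq:risk_sizing}.
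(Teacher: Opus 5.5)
Your proposal is correct and follows essentially the same route as the paper. You derive the identity by telescoping, then prove the state bound by a two-case induction that uses the increment bound $\gamma(B-\rho)$ and the safe condition across the one-day lag. The only cosmetic difference is that you split directly on whether $\theta_d\ge\theta_{\mathrm{safe}}$, whereas the paper splits on whether $\theta_{d+1}\ge\theta_{\mathrm{safe}}+\gamma(B-\rho)$ and infers $\theta_d\ge\theta_{\mathrm{safe}}$ from the preceding increment.
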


Under the uniform feasibility condition in~\cref{eq:rrc_safe_state},
\Cref{theorem:rolling_rc} proves a finite-horizon bound on the time-average
capped uncovered-deviation loss for each direction and group by the selected
budget plus a residual that decreases as $1/D$. The bound applies to every realized
sequence satisfying this condition, without requiring stationarity or
exchangeability. The loss cap ensures bounded feedback, but does not by itself
establish~\cref{eq:rrc_safe_state} or provide a corresponding bound on uncapped
EUD.
The day-$d+2$ protocol introduces a one-operating-day feedback lag. Relative to
immediate feedback, this lag doubles the worst-case overshoot term
in~\cref{eq:rrc_state_bound} but does not change the $O(1/D)$ rate or the
asymptotic guarantee.

Under marginal calibration, the average
in~\cref{eq:rrc_identity} is the capped EUD loss pooled over all $DH$
intervals. Under hour-specific calibration, the guarantee holds separately for
each hour across the $D$ days.
The one-sided bound reflects the objective of keeping EUD at or below its
budget. It applies to any stretching function for which a sufficiently large
calibration parameter satisfies~\cref{eq:rrc_safe_state}. The system operator
should select $B^j_g$ to reflect credible uncovered deviations, informed by
historical extremes and stress scenarios. A smaller cap limits the influence of
extreme observations, while a larger cap permits larger updates and increases
the finite-horizon residual. If reserve adjustments are capped, the maximum
adjustment must still allow~\cref{eq:rrc_safe_state} to hold. Otherwise, the
identity in~\cref{eq:rrc_identity} remains valid, but the finite-horizon upper
bound is not guaranteed.

%% file: Sections/05_experimental_design.tex
\section{Numerical Experiments}
\label{sec:experiments}

We evaluate the proposed adaptive reserve-calibration frameworks using coverage
and EUD as reliability criteria. Each framework is compared with uncalibrated
quantile reserves and a rolling-window calibration benchmark on a chronological
ERCOT sequence. We consider marginal and hour-specific calibration and assess
both target attainment and the required reserve.

\subsection{Data and Experimental Setup}
\label{subsec:data_experimental_setup}

\subsubsection{Dataset}

We use four years of hourly ERCOT data from 2022 through 2025. As shown in~\cref{fig:forecast_errors}, the
dispersion and tails of the net-load deviations vary across seasons and years.

We construct an hourly dataset from ERCOT's public reports~\cite{ercotdata}.
For each operating hour, forecast net load equals the value from ERCOT's
rolling seven-day system-load forecast minus the median (P50) values from its
short-term wind and solar forecasts. ERCOT's cutoff for submitting Day-Ahead
Market inputs for day $d$ is 10:00~h on day $d-1$. We therefore use the latest
forecasts published before that cutoff, so the day-ahead values contain only
information available at scheduling~\cite{ercotNodalProtocols2026}. Realized
net load equals official Native Load minus reported wind and solar generation,
and the hourly net-load deviation is realized net load minus forecast net load.
Behind-the-meter generation is not available as a separate series; its net
effect is reflected implicitly in aggregate system load when it reduces the
demand observed by ERCOT.

\subsubsection{Base Model and Features}

The model inputs are forecast load, wind generation, solar generation, derived
net load, hour of day, day of year, and day of week. Together, these features
describe the expected system conditions and calendar patterns known when the
reserves are scheduled. Because the feature set combines continuous forecasts
with discrete calendar variables, we estimate the conditional quantiles with
XGBoost, which can accommodate both directly~\cite{chen2016xgboost}. This avoids
the feature scaling and calendar encodings typically introduced for neural
networks.

The model choice is not intrinsic to the framework. Linear quantile regression,
quantile regression forests, neural networks, or any other learner capable of
estimating conditional quantiles may be used instead
\cite{koenker1978regression,meinshausen2006quantile,cannon2011quantile}. The
finite-horizon guarantees in
\cref{theorem:online_aci_cqr,theorem:directional_aci_cqr,theorem:rolling_rc}
are independent of this choice, so replacing XGBoost does not change the adaptive
calibration procedures.

\subsubsection{Calibration Groupings}

We compare marginal calibration, which pools all hours, with hour-specific
calibration, which applies the reliability target separately to each operating
hour. Other groupings may be selected according to the operator's reliability
needs and the data available within each group.

\subsubsection{Chronological Calibration Protocol}
\label{subsubsec:chronological_calibration}

The 2022--2023 observations form the historical period used for model training,
validation, and initialization. The 2024--2025 observations form one
uninterrupted chronological test sequence. The models used on the first test
day are fitted with all available historical observations and are subsequently
refitted each month using an expanding window that contains only observations
available by the applicable day-ahead cutoff.

For operating day $d+1$, ERCOT's day-ahead inputs are due at 10:00 on day $d$
\cite{ercotNodalProtocols2026}, before the complete outcomes of day $d$ are
available. Accordingly, outcomes, scores, and model-training observations from
day $d$ are first used for the
day-$d+2$ schedule. The interval between the end of day $d$ and the next
day-ahead cutoff provides time for the data to be validated and processed. This
one-operating-day feedback lag is a realistic deployment protocol for the
stated cutoff and prevents information leakage. A market with a different
cutoff or processing delay can use the corresponding feedback lag without
changing the calibration framework. The adaptive states carry continuously
across 2024 and 2025, preserving the heteroscedasticity and distribution shifts
in the observed sequence.

%% file: Sections/06_results.tex
\subsection{Coverage Calibration Results}
\label{subsec:coverage_results}

We compare ACI-CQR and Directional ACI-CQR with uncalibrated quantile reserves
(``Raw'') and rolling-window CQR (``Window CQR'') over the chronological
2024--2025 test. All reported rates and mean reserves are calculated over the
complete test period. All methods target 90\% coverage. ``Raw'' uses the fitted
conditional 5\textsuperscript{th} and 95\textsuperscript{th} quantiles, with
models refitted monthly on an expanding window. ``Window CQR'' recalculates its
10\% conformity correction from the latest 100 complete days; its standard
guarantee relies on exchangeability~\cite{romano2019conformalized}, which is not
supported by the distribution changes in \cref{fig:forecast_errors}.

The adaptive methods retain the same 100-day score history. ACI-CQR maintains
one 10\% state per group, whereas Directional ACI-CQR maintains downward and
upward states with 5\% targets. Marginal calibration pools all intervals in
the window, while each hour-specific group receives one score per day.
Following~\cite{gibbs2021adaptive}, all step sizes equal 0.005 and are fixed
before testing; the value balances responsiveness with variation in the
adaptive level. One hundred historical days initialize the score collections
and the next 100 are processed sequentially to initialize the states and
advance the score histories.
Subsequent scores and state updates first enter the day-$d+2$ schedule according
to~\cref{eq:aci_cqr_update,eq:directional_aci_update}.

\Cref{fig:cqr_directional} reports the pooled results. ``Raw'' has downward and
upward miss rates of 4.95\% and 9.27\%, yielding only 85.79\% overall coverage.
``Window CQR'' and ACI-CQR raise coverage to 90.26\%--90.45\%, but their
directional rates remain uneven at about 3.3\% downward and 6.4\% upward.
Directional ACI-CQR instead obtains 4.60\% and 4.85\% under marginal
calibration and 4.67\% and 4.96\% under hour-specific calibration. It is the
only method that meets the 5\% upward target under both groupings, showing that
total calibration alone does not control the allocation of misses. In
particular, an apparently satisfactory 90\% total coverage rate can combine a
conservative downward reserve with an undercovered upward reserve. The
directional update makes that allocation explicit and regulates the two
physical exposure directions separately.

``Raw'' uses 7,535~MW of mean total reserve. Under marginal calibration,
``Window CQR,'' ACI-CQR, and Directional ACI-CQR use 8,520, 8,542, and
8,571~MW, respectively. The latter total comprises 3,117~MW downward and
5,454~MW upward. Directional ACI-CQR therefore corrects the upward undercoverage
mainly by reallocating reserve, not by materially increasing the total relative
to the other calibrated methods. When undercoverage is concentrated in one
direction, reallocating reserve can improve reliability with little change in
the total requirement.

\begin{figure}[t]
    \centering
    \includegraphics[width=\linewidth]{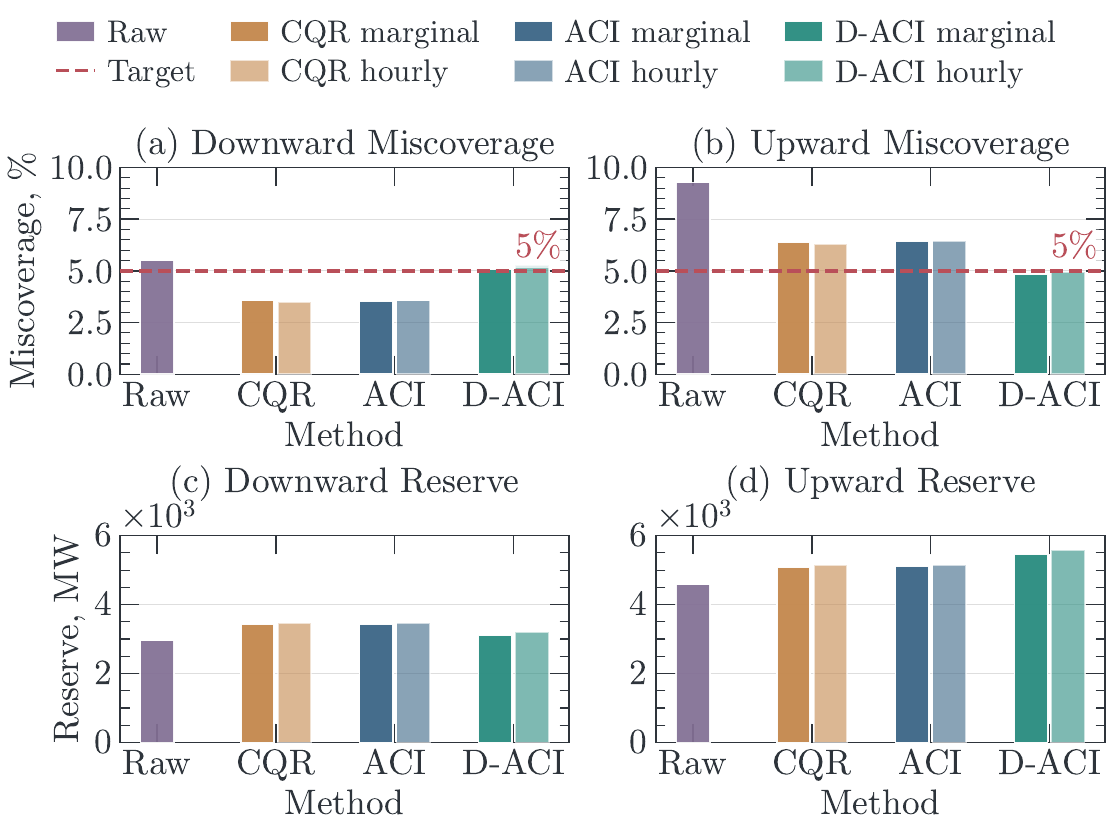}
    \caption{ERCOT pooled 2024--2025 coverage-calibration results: (a) downward
    miscoverage; (b) upward miscoverage; (c) mean downward reserve; and
    (d) mean upward reserve. CQR, ACI, and D-ACI denote ``Window CQR,'' ACI-CQR,
    and Directional ACI-CQR, respectively. Dashed lines mark the 5\% directional
    targets.}
    \Description{Four grouped-bar panels compare ``Raw,'' ``Window CQR,'' ACI-CQR, and
    Directional ACI-CQR under marginal and hourly calibration. The upper panels
    show downward- and upward-tail miscoverage, and the lower panels show mean
    downward and upward reserves.}
    \label{fig:cqr_directional}
\end{figure}

\Cref{fig:cqr_hourly} exposes the variation hidden by pooled rates. Under
marginal Directional ACI-CQR, hourly directional miss rates span
2.48\%--8.67\%; hour-specific calibration narrows the range to
3.71\%--5.64\%. Its mean total reserve is 8,771~MW, only 200~MW (2.33\%) above
marginal Directional ACI-CQR. Thus, a small reserve premium replaces hourly
overcoverage and undercoverage with substantially more consistent reliability.

\begin{figure}[t]
    \centering
    \includegraphics[width=\linewidth]{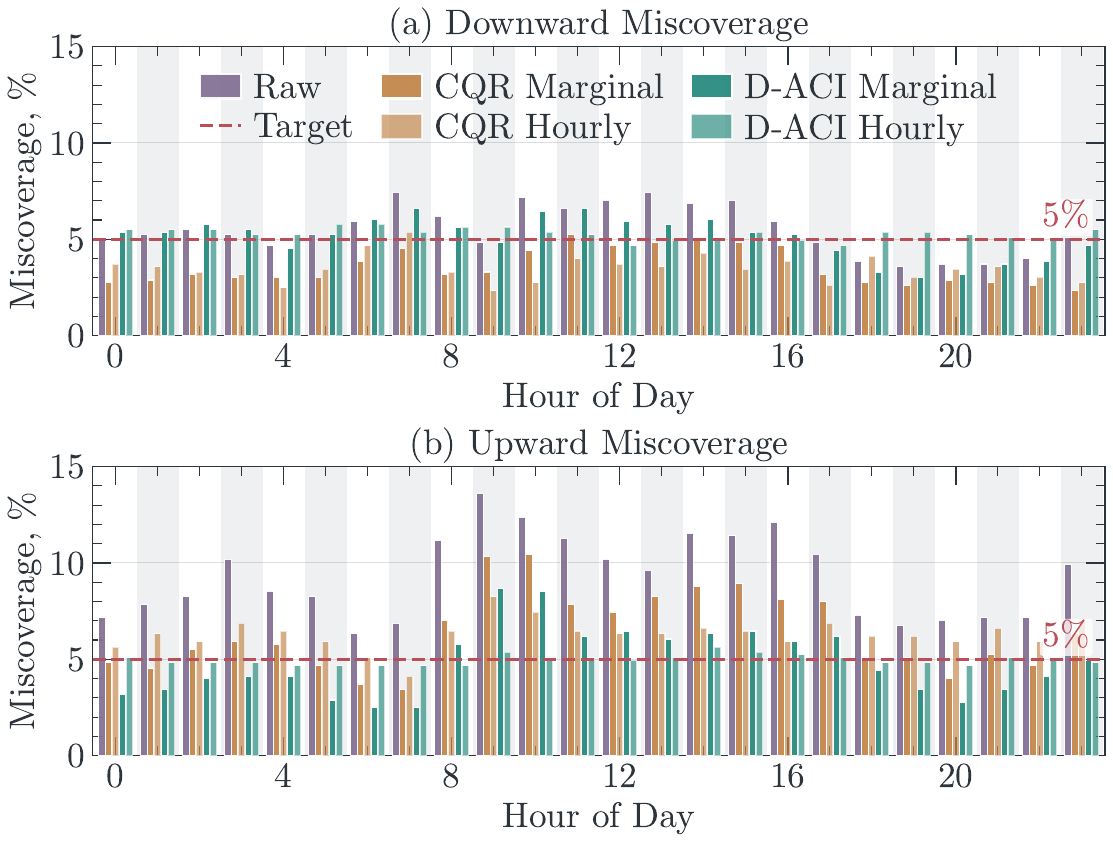}
    \caption{ERCOT directional miscoverage by operating hour: (a) downward and
    (b) upward. CQR and D-ACI denote ``Window CQR'' and Directional ACI-CQR,
    respectively. Hourly calibration uses 24 hour-specific states; dashed lines
    mark the 5\% directional targets.}
    \Description{Two grouped-bar panels compare ``Raw,'' ``Window CQR,'' and Directional
    ACI-CQR miscoverage under marginal and hourly calibration across the 24 ERCOT
    operating hours.}
    \label{fig:cqr_hourly}
\end{figure}

\subsection{EUD Calibration Results}
\label{subsec:eud_results}

We compare Rolling RC with two baselines using the same test period, groupings,
and model-refitting protocol as the coverage study.

The ``Raw'' baseline uses directional quantile levels selected from historical
EUD budgets, unlike the fixed 5\textsuperscript{th} and 95\textsuperscript{th}
quantiles in the coverage study. Reserve savings are therefore evaluated
against the baseline within each study; the two ``Raw'' reserve totals are not
directly comparable.
``Window CRC'' recalculates its adjustment daily from the latest 100 complete
days~\cite{angelopoulos2024conformal}. The day-$d+2$ window ends with day $d$,
and its standard guarantee relies on exchangeability.

EUD budgets depend on system size, deviation magnitude, security criteria,
corrective actions, deliverability, cost, and the consequences of each exposure
direction~\cite{ercotASMethodology2026,caisoBPM2026,nyisoASManual2026}. An
operator should therefore select the budgets with the corrective actions and
consequences of short and long exposure in mind; they need not be equal.
Unlike a coverage level, an EUD budget has physical units and is consequently
system dependent. The values used here are experimental reliability targets,
not universal operating recommendations.

We use marginal Directional ACI-CQR at 90\% coverage on the 2022--2023
validation data as the reference, setting the mean uncovered-deviation budgets
to
$\rho^{\mathrm{short}}=55$~MW and $\rho^{\mathrm{long}}=48$~MW.
These budgets describe the expected residual after reserve is applied, not the
magnitude of the underlying deviations.
The corresponding baseline quantile levels are
$\tau_{\mathrm{short}}=0.9735$ and $\tau_{\mathrm{long}}=0.9858$. All levels
and budgets are fixed before testing and shared by both groupings.

We tune $\gamma\in\{0.1,0.25,0.5,1,2\}$ on the latest 100 complete historical
days, selecting the lowest-reserve candidate whose EUD is no more than 2~MW
above budget, separately by hour when applicable. If none qualifies, we
minimize the maximum exceedance. Test observations do not influence this
selection. This gives $\gamma^{\mathrm{short}}_g=0.1$ and
$\gamma^{\mathrm{long}}_g=0.5$ for marginal calibration, and
$\gamma^{\mathrm{short}}_g=0.5$ and $\gamma^{\mathrm{long}}_g=1$ for
hour-specific calibration. We use $\varphi_j(\theta)=\theta$ and set $B^j_g$
to the largest corresponding 2022--2023 deviation. All choices precede testing.

In \cref{fig:crc_directional}, ``Raw'' test EUD is 18.93~MW downward and
73.31~MW upward, below and above the respective budgets. ``Window CRC'' exceeds
the marginal upward budget by 5.14~MW and the hour-specific downward budget by
4.99~MW. Rolling RC reduces the marginal upward exceedance to 1.58~MW; its
hour-specific EUD values, 47.66~MW downward and 53.21~MW upward, meet both
pooled budgets. Under the observed distribution changes, historical tuning
alone leaves one direction substantially underprotected even when the other
remains conservative.

``Raw'' uses 9,536~MW of mean total reserve. Marginal Rolling RC uses 9,027~MW,
5.3\% less than ``Raw'' and 1.2\% less than marginal ``Window CRC,'' while
shifting reserve upward to bring both EUD values closer to budget. Its
downward and upward components are 3,443 and 5,584~MW, compared with 4,299 and
5,237~MW for ``Raw.''

\begin{figure}[t]
    \centering
    \includegraphics[width=\linewidth]{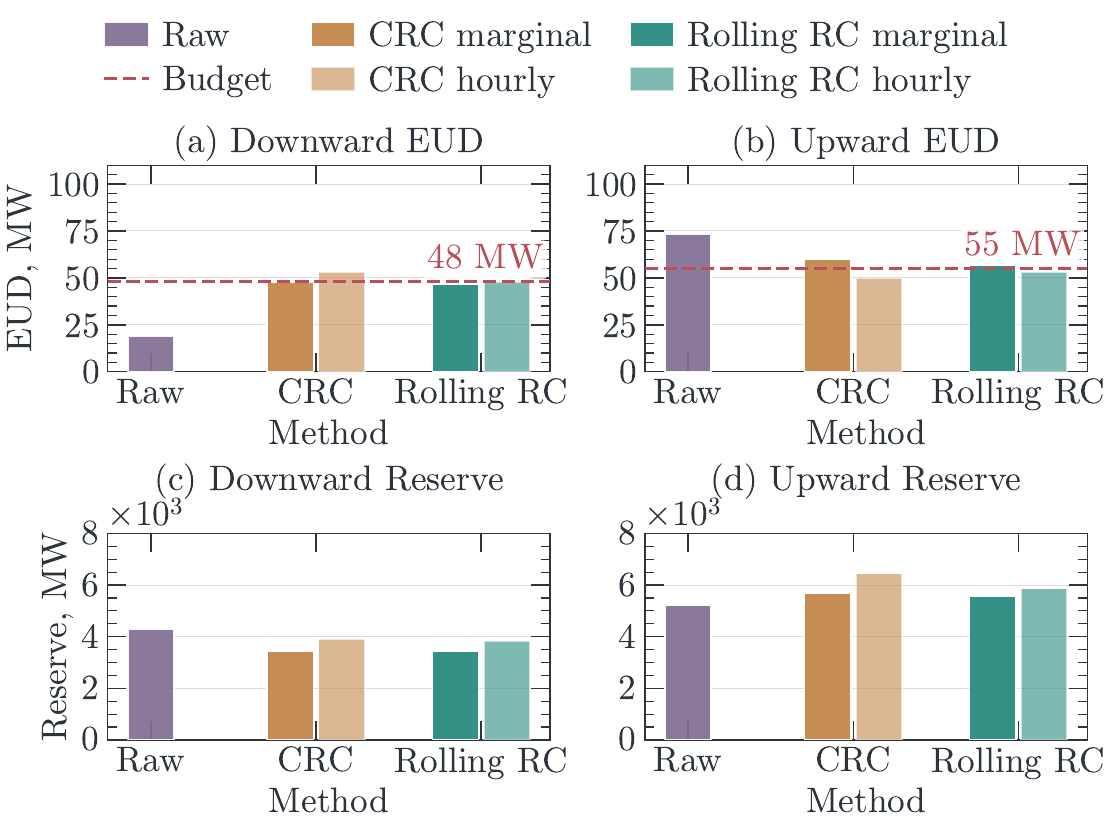}
    \caption{ERCOT pooled 2024--2025 EUD-calibration results: (a) downward EUD;
    (b) upward EUD; (c) mean downward reserve; and (d) mean upward reserve. CRC
    denotes ``Window CRC.'' Dashed lines mark the directional EUD budgets.}
    \Description{Four grouped-bar panels compare ``Raw'' risk, ``Window CRC,'' and
    Rolling Risk Control under marginal and hour-specific calibration. The upper panels
    show downward and upward EUD, and the lower panels show mean downward and
    upward reserves.}
    \label{fig:crc_directional}
\end{figure}

\Cref{fig:crc_hourly} shows that satisfactory pooled EUD can coexist with
substantial budget exceedances at individual operating hours. Hour-specific
Rolling RC reduces the largest hourly exceedance from 69.6 to 6.3~MW, with
7.4\% more mean total reserve than marginal Rolling RC. This is the observed
reserve premium for enforcing the EUD budgets separately at each hour. Within
the same hour-specific grouping, Rolling RC requires 6.4\% less reserve than
``Window CRC.'' These comparisons separate two decisions: the temporal
resolution at which reliability is controlled and the calibration method used
at that resolution. The first determines the reserve premium for localizing
the target; the second determines how efficiently that target is tracked.

\begin{figure}[t]
    \centering
    \includegraphics[width=\linewidth]{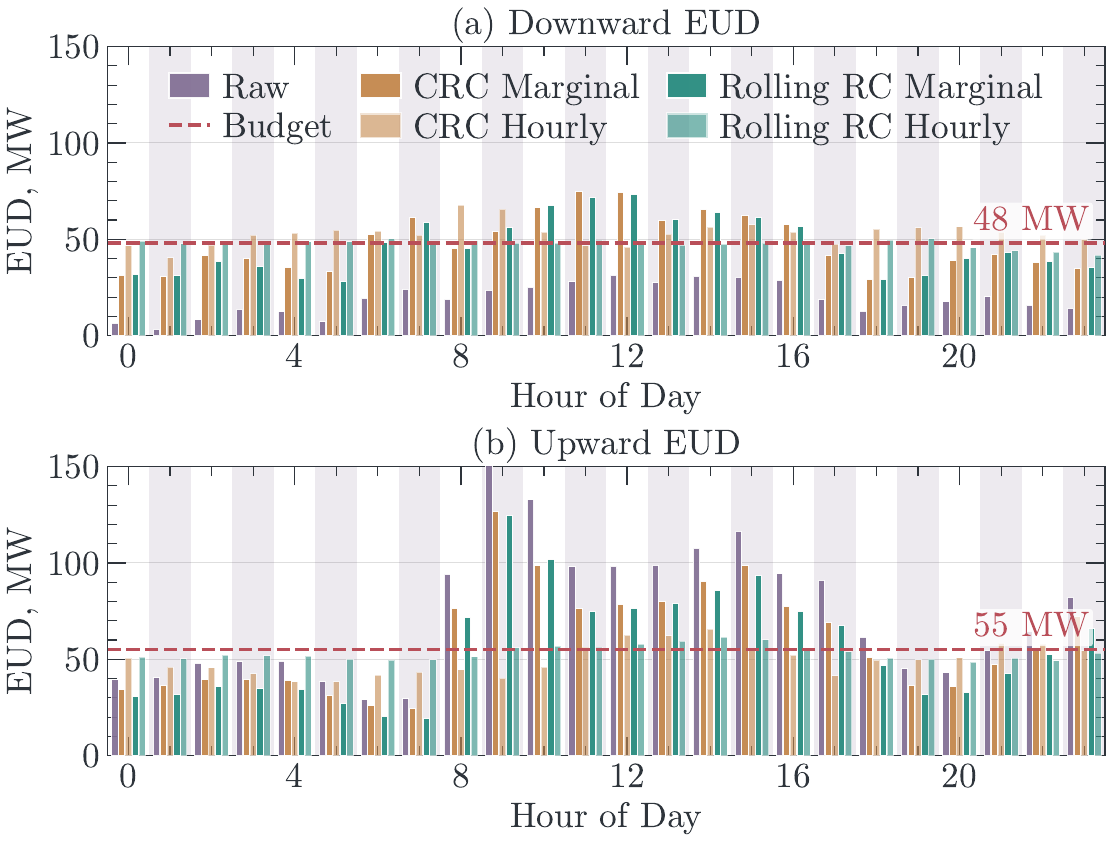}
    \caption{ERCOT directional EUD by operating hour: (a) downward and
    (b) upward. CRC denotes ``Window CRC.'' Marginal and hour-specific calibration are
    compared; dashed lines mark the directional EUD budgets.}
    \Description{Two grouped-bar panels compare ``Raw,'' ``Window CRC,'' and Rolling
    Risk Control EUD under marginal and hour-specific calibration across the 24 ERCOT
    operating hours.}
    \label{fig:crc_hourly}
\end{figure}

\subsection{Operational Scope}
\label{subsec:deployment_scope}

The adaptive methods regulate selected directional miss rates or capped EUD
losses; they do not establish economic optimality or security sufficiency.
Reserve volume is not production cost, and this system-level study does not
model nodal deliverability, outages, sub-hourly ramps, bids, or market
responses. Deployment therefore requires operator-selected targets and bounds,
security-constrained testing, fixed decision and feedback timing, and
prospective monitoring.

%% file: Sections/07_conclusion.tex
\section{Conclusion}
\label{sec:conclusion}

This work presents a sequential calibration framework for sizing flexibility
reserves under changing net-load deviation distributions. ACI-CQR calibrates total
miscoverage, Directional ACI-CQR applies separate short- and long-exposure
updates, and Rolling RC calibrates reserves against directional EUD budgets.
We prove finite-horizon bounds on realized time-average miscoverage and capped
EUD loss without assuming stationarity or exchangeability. The day-$d+2$
feedback protocol also shows how the methods can respect a day-ahead
scheduling cutoff without using outcomes that were unavailable when the
schedule was formed.

The chronological ERCOT evaluation shows that reserves estimated from
historical data can depart from their coverage and EUD targets during
deployment, even with monthly model retraining. Directional ACI-CQR achieves
approximately 90\% reserve coverage with upward and downward miscoverage
near 5\%. Hour-specific calibration improves reliability across hours with only
2.3\% more total reserve than marginal calibration. Marginal Rolling RC
brings pooled EUD close to both budgets while using 5.3\% less total reserve
than the ``Raw'' method. Hour-specific Rolling RC reduces the largest hourly
EUD-budget exceedance from 69.6 to 6.3~MW, with 7.4\% more reserve than
marginal Rolling RC and 6.4\% less than hour-specific ``Window CRC.''

The approach supplies deterministic upward and downward reserve requirements,
so it can enter an existing scheduling model through updated reserve inputs.
It is not tied to XGBoost, hour-of-day grouping, or the day-ahead-to-real-time
horizon; other learners, finite operational groups, and scheduling horizons
can be used when their information and feedback timing is defined. The method
controls the selected reliability metrics, not production cost or security
sufficiency. Deployment requires operator-selected targets and bounds, fixed
feedback timing, and security-constrained testing. Future work should evaluate
costs, nodal deliverability, outages, sub-hourly ramps, bids, and market
responses.

%% file: Appendices/proofs.tex
\section{Proofs}
\label{app:proofs}

\subsection{Total-Miscoverage Calibration}
\begin{proof}[Proof of \cref{theorem:online_aci_cqr}]
    Summing~\cref{eq:aci_cqr_update} over $d=1,\ldots,D$ gives
    \begin{equation*}
        \alpha_{D+2,g}-\alpha_{2,g}
        =\gamma_g\left(D\alpha-
        \sum_{d=1}^{D}M_{d,g}^{\mathrm{ACI}}\right),
    \end{equation*}
    which yields~\cref{eq:aci_cqr_identity} after rearranging.

    To bound the miscoverage deviation in~\cref{eq:aci_cqr_identity}, we first
    show that the adaptive state remains in
    $[-2\gamma_g(1-\alpha),1+2\gamma_g\alpha]$. Because
    $M_{d,g}^{\mathrm{ACI}}\in[0,1]$, an update increment lies in
    $[-\gamma_g(1-\alpha),\gamma_g\alpha]$. Starting from
    $\alpha_{2,g}\in[0,1]$, the first delayed update already lies inside the
    claimed interval. For $d\geq2$, if
    $\alpha_{d+1,g}>1+\gamma_g\alpha$, then the preceding increment bound
    implies $\alpha_{d,g}>1$. The endpoint convention therefore makes the
    day-$d$ interval empty, so $M_{d,g}^{\mathrm{ACI}}=1$ and the update to
    $\alpha_{d+2,g}$ decreases the state. Otherwise, the largest possible
    increment gives $\alpha_{d+2,g}\leq1+2\gamma_g\alpha$.

    The lower bound follows symmetrically. If
    $\alpha_{d+1,g}< -\gamma_g(1-\alpha)$, then the preceding increment bound
    implies $\alpha_{d,g}<0$. The day-$d$ interval is then $\sR$, so
    $M_{d,g}^{\mathrm{ACI}}=0$ and the update increases the state. Otherwise,
    the smallest possible increment gives
    $\alpha_{d+2,g}\geq-2\gamma_g(1-\alpha)$. Starting from
    $\alpha_{1,g},\alpha_{2,g}\in[0,1]$, induction establishes the stated
    state bound. Finally,
    \begin{equation*}
        |\alpha_{2,g}-\alpha_{D+2,g}|
        \leq \max\{\alpha_{2,g},1-\alpha_{2,g}\}+2\gamma_g.
    \end{equation*}
    Combining this inequality with~\cref{eq:aci_cqr_identity}
    gives~\cref{eq:aci_cqr_bound}.
\end{proof}

\subsection{Directional-Miscoverage Calibration}
\begin{proof}[Proof of \cref{theorem:directional_aci_cqr}]
    Fix $j\in\{\mathrm{short},\mathrm{long}\}$. Summing
    \cref{eq:directional_aci_update} over $d=1,\ldots,D$ and rearranging gives
    \cref{eq:directional_aci_identity}. The endpoint convention determines the
    directional miss whenever the state leaves $[0,1]$. The same two-step
    argument as in the total-miscoverage proof, with target $\alpha^j$ and
    step size $\gamma^j_g$, gives
    \begin{equation*}
        \alpha^j_{d,g}\in
        [-2\gamma^j_g(1-\alpha^j),1+2\gamma^j_g\alpha^j].
    \end{equation*}
    The first delayed update from $\alpha^j_{2,g}\in[0,1]$ lies inside this
    interval. Thereafter, a state above $1+\gamma^j_g\alpha^j$ implies that the state
    used for the corresponding day-$d$ decision exceeded one, forcing
    $M^j_{d,g}=1$; a state below $-\gamma^j_g(1-\alpha^j)$ implies that the
    decision state was negative, forcing $M^j_{d,g}=0$. The next delayed update
    therefore moves inward in either case. Starting from the two initial states
    in $[0,1]$ proves the bound by induction. Substituting it into
    \cref{eq:directional_aci_identity} gives
    \cref{eq:directional_aci_bound}.

    A total miss belongs to the union of the short- and long-exposure miss
    events, which gives \cref{eq:directional_union_bound}. When the calibrated
    interval is nonempty, these events are disjoint, and the inequality becomes
    an equality. Combining the directional bounds and taking the limit superior
    establishes the final claim.
\end{proof}

\subsection{Rolling Risk Control}
\begin{proof}[Proof of \cref{theorem:rolling_rc}]
    Summing~\cref{eq:online_crc} over $d=1,\ldots,D$ gives
    \begin{equation*}
        \theta^j_{D+2,g}-\theta^j_{2,g}
        =\gamma^j_g\sum_{d=1}^{D}
        \left(L^j_{d,g}(\theta^j_{d,g})-\rho^j\right).
    \end{equation*}
    Rearranging gives~\cref{eq:rrc_identity}. Let
    $K^j_g:=\gamma^j_g(B^j_g-\rho^j)$, the largest possible upward increment.
    The first delayed update satisfies
    $\theta^j_{3,g}\leq\theta^j_{2,g}+K^j_g$. For $d\geq2$, if
    $\theta^j_{d+1,g}\geq\theta^j_{\mathrm{safe},g}+K^j_g$, then the preceding
    increment bound implies
    $\theta^j_{d,g}\geq\theta^j_{\mathrm{safe},g}$. Hence,
    \cref{eq:rrc_safe_state,eq:online_crc} give
    $\theta^j_{d+2,g}\leq\theta^j_{d+1,g}$. If instead
    $\theta^j_{d+1,g}<\theta^j_{\mathrm{safe},g}+K^j_g$, then
    \begin{equation*}
        \theta^j_{d+2,g}
        \leq\theta^j_{d+1,g}+K^j_g
        <\theta^j_{\mathrm{safe},g}+2K^j_g.
    \end{equation*}
    Starting from $\theta^j_{1,g}$, $\theta^j_{2,g}$, and the stated bound on
    $\theta^j_{3,g}$, these two cases establish
    \cref{eq:rrc_state_bound} by induction. Substituting the state bound into
    \cref{eq:rrc_identity} gives~\cref{eq:rrc_finite_bound}. Taking the limit
    superior as $D\rightarrow\infty$ proves~\cref{eq:risk_sizing}.
\end{proof}

%% file: utils/bibliography.bib
@inproceedings{angelopoulos2024conformal,
  author    = {Angelopoulos, Anastasios Nikolas and Bates, Stephen and Fisch, Adam and Lei, Lihua and Schuster, Tal},
  title     = {Conformal Risk Control},
  booktitle = {Proc. 12th Int. Conf. Learn. Represent. (ICLR)},
  year      = {2024},
  publisher = {OpenReview.net},
  address   = {Vienna, Austria}
}

@article{cannon2011quantile,
  author  = {Cannon, Alex J.},
  title   = {Quantile Regression Neural Networks: Implementation in {R} and Application to Precipitation Downscaling},
  journal = {Comput. Geosci.},
  volume  = {37},
  number  = {9},
  pages   = {1277--1284},
  year    = {2011},
  doi     = {10.1016/j.cageo.2010.07.005}
}

@inproceedings{chen2016xgboost,
  author    = {Chen, Tianqi and Guestrin, Carlos},
  title     = {{XGBoost}: A Scalable Tree Boosting System},
  booktitle = {Proc. 22nd ACM SIGKDD Int. Conf. Knowl. Discovery Data Mining},
  pages     = {785--794},
  year      = {2016},
  publisher = {Association for Computing Machinery},
  address   = {New York, NY, USA},
  doi       = {10.1145/2939672.2939785},
  isbn      = {9781450342322}
}

@article{feldman2023rolling,
  author  = {Feldman, Shai and Ringel, Liran and Bates, Stephen and Romano, Yaniv},
  title   = {Achieving Risk Control in Online Learning Settings},
  journal = {Trans. Mach. Learn. Res.},
  year    = {2023},
  month   = nov
}

@inproceedings{gibbs2021adaptive,
  author    = {Gibbs, Isaac and Cand{\`e}s, Emmanuel J.},
  title     = {Adaptive Conformal Inference under Distribution Shift},
  booktitle = {Adv. Neural Inf. Process. Syst.},
  volume    = {34},
  pages     = {1660--1672},
  year      = {2021},
  publisher = {Curran Associates, Inc.},
  address   = {Red Hook, NY, USA}
}

@article{holttinen2012methodologies,
  author  = {Holttinen, Hannele and Milligan, Michael and Ela, Erik and Menemenlis, Nickie and Dobschinski, Jan and Rawn, Barry and Bessa, Ricardo J. and Flynn, Damian and Gomez-Lazaro, Emilio and Detlefsen, Nina K.},
  title   = {Methodologies to Determine Operating Reserves Due to Increased Wind Power},
  journal = {IEEE Trans. Sustain. Energy},
  volume  = {3},
  number  = {4},
  pages   = {713--723},
  year    = {2012},
  doi     = {10.1109/TSTE.2012.2208207}
}

@article{meinshausen2006quantile,
  author  = {Meinshausen, Nicolai},
  title   = {Quantile Regression Forests},
  journal = {J. Mach. Learn. Res.},
  volume  = {7},
  number  = {35},
  pages   = {983--999},
  year    = {2006}
}

@book{bishop2006pattern,
  author    = {Bishop, Christopher M.},
  title     = {Pattern Recognition and Machine Learning},
  publisher = {Springer},
  address   = {New York, NY, USA},
  year      = {2006}
}

@inproceedings{ortega2020risk,
  author    = {Ortega-Vazquez, Miguel A. and Costilla-Enriquez, Napoleon and Ela, Erik and Tuohy, Aidan},
  title     = {Risk-Based Reserve Procurement},
  booktitle = {Proc. Int. Conf. Probabilistic Methods Appl. Power Syst. (PMAPS)},
  pages     = {1--7},
  year      = {2020},
  publisher = {IEEE},
  address   = {Piscataway, NJ, USA},
  doi       = {10.1109/PMAPS47429.2020.9183585}
}

@inproceedings{ortega2022generation,
  author    = {Ortega-Vazquez, Miguel A.},
  title     = {Generation of Multi-Resolution Scenarios of Stochastic Variables for Operation Planning Studies},
  booktitle = {Proc. 17th Int. Conf. Probabilistic Methods Appl. Power Syst. (PMAPS)},
  pages     = {1--6},
  year      = {2022},
  publisher = {IEEE},
  address   = {Piscataway, NJ, USA},
  doi       = {10.1109/PMAPS53380.2022.9810581}
}

@inproceedings{romano2019conformalized,
  author    = {Romano, Yaniv and Patterson, Evan and Cand{\`e}s, Emmanuel J.},
  title     = {Conformalized Quantile Regression},
  booktitle = {Adv. Neural Inf. Process. Syst.},
  volume    = {32},
  year      = {2019},
  publisher = {Curran Associates, Inc.},
  address   = {Red Hook, NY, USA}
}

@inproceedings{vovk2012conditional,
  author    = {Vovk, Vladimir},
  title     = {Conditional Validity of Inductive Conformal Predictors},
  booktitle = {Proc. Asian Conf. Mach. Learn.},
  series    = {Proc. Mach. Learn. Res.},
  volume    = {25},
  pages     = {475--490},
  year      = {2012},
  publisher = {PMLR},
  address   = {Singapore Management University, Singapore}
}

@article{devos2019dynamic,
  author  = {De Vos, K. and Stevens, N. and Devolder, O. and Papavasiliou, A. and Hebb, B. and Matthys-Donnadieu, J.},
  title   = {Dynamic Dimensioning Approach for Operating Reserves: Proof of Concept in {Belgium}},
  journal = {Energy Policy},
  volume  = {124},
  pages   = {272--285},
  year    = {2019},
  doi     = {10.1016/j.enpol.2018.09.031}
}

@article{dhaka2026physics,
  author  = {Dhaka, Poonam and Sreejeth, Mini and Tripathi, M. M.},
  title   = {Physics-Informed Adaptive Conformal Prediction for Wind Power Under Dynamic Weather Regimes},
  journal = {IEEE Trans. Ind. Appl.},
  volume  = {62},
  number  = {4},
  pages   = {6856--6869},
  year    = {2026},
  doi     = {10.1109/TIA.2026.3653910}
}

@article{kath2021conformal,
  author  = {Kath, Christopher and Ziel, Florian},
  title   = {Conformal Prediction Interval Estimation and Applications to Day-Ahead and Intraday Power Markets},
  journal = {Int. J. Forecast.},
  volume  = {37},
  number  = {2},
  pages   = {777--799},
  year    = {2021},
  doi     = {10.1016/j.ijforecast.2020.09.006}
}

@article{zhao2021operating,
  author  = {Zhao, Changfei and Wan, Can and Song, Yonghua},
  title   = {Operating Reserve Quantification Using Prediction Intervals of Wind Power: An Integrated Probabilistic Forecasting and Decision Methodology},
  journal = {IEEE Trans. Power Syst.},
  volume  = {36},
  number  = {4},
  pages   = {3701--3714},
  year    = {2021},
  doi     = {10.1109/TPWRS.2021.3053847}
}

@misc{ercotdata,
  author       = {{Electric Reliability Council of Texas}},
  title        = {{ERCOT} Market Information and Public Data},
  howpublished = {\url{https://www.ercot.com/mp/data-products}},
  note         = {Accessed: Sep. 7, 2026},
  year         = {2026}
}

@article{costilla2023operating,
  author  = {Costilla-Enriquez, Napoleon and Ortega-Vazquez, Miguel A. and Tuohy, Aidan and Motley, Amber and Webb, Rebecca},
  title   = {Operating Dynamic Reserve Dimensioning Using Probabilistic Forecasts},
  journal = {IEEE Trans. Power Syst.},
  volume  = {38},
  number  = {1},
  pages   = {603--616},
  year    = {2023},
  doi     = {10.1109/TPWRS.2022.3163106}
}

@book{kirschen2018fundamentals,
  author    = {Kirschen, Daniel S. and Strbac, Goran},
  title     = {Fundamentals of Power System Economics},
  edition   = {2},
  publisher = {John Wiley \& Sons},
  year      = {2018},
  isbn      = {9781119213253}
}

@techreport{ela2011operating,
  author      = {Ela, Erik and Milligan, Michael and Kirby, Brendan},
  title       = {Operating Reserves and Variable Generation},
  institution = {National Renewable Energy Laboratory},
  number      = {NREL/TP-5500-51978},
  year        = {2011},
  doi         = {10.2172/1023095},
  url         = {https://www.nrel.gov/docs/fy11osti/51978.pdf}
}

@techreport{nerc2025ltra,
  author      = {{North American Electric Reliability Corporation}},
  title       = {2025 Long-Term Reliability Assessment},
  institution = {North American Electric Reliability Corporation},
  address     = {Atlanta, GA, USA},
  month       = jan,
  year        = {2026},
  url         = {https://www.nerc.com/globalassets/our-work/assessments/nerc_ltra_2025.pdf}
}

@article{matos2011setting,
  author  = {Matos, Manuel A. and Bessa, Ricardo J.},
  title   = {Setting the Operating Reserve Using Probabilistic Wind Power Forecasts},
  journal = {IEEE Trans. Power Syst.},
  volume  = {26},
  number  = {2},
  pages   = {594--603},
  year    = {2011},
  doi     = {10.1109/TPWRS.2010.2065818}
}

@article{bessa2012reserve,
  author  = {Bessa, Ricardo J. and Matos, Manuel A. and Costa, Ivo C. and Bremermann, Leonardo and Franchin, Ivan Gustavo and Pestana, Rui and Machado, N{\'e}lio and Waldl, Hans-Peter and Wichmann, Christian},
  title   = {Reserve Setting and Steady-State Security Assessment Using Wind Power Uncertainty Forecast: A Case Study},
  journal = {IEEE Trans. Sustain. Energy},
  volume  = {3},
  number  = {4},
  pages   = {827--836},
  year    = {2012},
  doi     = {10.1109/TSTE.2012.2199340}
}

@article{fahiman2019dynamic,
  author  = {Fahiman, Fateme and Disano, Steven and Erfani, Sarah Monazam and Mancarella, Pierluigi and Leckie, Christopher},
  title   = {Data-Driven Dynamic Probabilistic Reserve Sizing Based on Dynamic {Bayesian} Belief Networks},
  journal = {IEEE Trans. Power Syst.},
  volume  = {34},
  number  = {3},
  pages   = {2281--2291},
  year    = {2019},
  doi     = {10.1109/TPWRS.2018.2884711}
}

@article{koenker1978regression,
  author  = {Koenker, Roger and Bassett, Jr., Gilbert},
  title   = {Regression Quantiles},
  journal = {Econometrica},
  volume  = {46},
  number  = {1},
  pages   = {33--50},
  year    = {1978},
  doi     = {10.2307/1913643}
}

@article{barber2023beyond,
  author  = {Barber, Rina Foygel and Cand{\`e}s, Emmanuel J. and Ramdas, Aaditya and Tibshirani, Ryan J.},
  title   = {Conformal Prediction Beyond Exchangeability},
  journal = {Ann. Stat.},
  volume  = {51},
  number  = {2},
  pages   = {816--845},
  year    = {2023},
  doi     = {10.1214/23-AOS2276}
}

@misc{ercotNodalProtocols2026,
  author       = {{Electric Reliability Council of Texas}},
  title        = {{ERCOT} Nodal Protocols, Section 4: Day-Ahead Operations},
  howpublished = {\url{https://www.ercot.com/files/docs/2024/06/28/04-080126_Nodal.docx}},
  note         = {Effective Aug. 1, 2026; accessed: Sep. 12, 2026},
  year         = {2026}
}

@misc{ercotASMethodology2026,
  author       = {{ERCOT}},
  title        = {Recommendation Regarding 2026 {ERCOT} Methodologies for Determining Minimum Ancillary Service Requirements},
  howpublished = {\url{https://www.ercot.com/files/docs/2025/09/15/15-Recommendation-regarding-2026-ERCOT-Methodologies-for-Determining-Minimum-Ancillary-Service-Requirements.pdf}},
  note         = {Accessed: Sep. 9, 2026},
  year         = {2025}
}

@misc{caisoBPM2026,
  author       = {{California Independent System Operator}},
  title        = {Business Practice Manual for Market Operations},
  howpublished = {\url{https://bpmcm.caiso.com/BPM\%20Document\%20Library/Market\%20Operations/BPM_for_Market\%20Operations_V104_Redline.pdf}},
  note         = {Version 104, revised May 1, 2026; accessed: Sep. 9, 2026},
  year         = {2026}
}

@misc{nyisoASManual2026,
  author       = {{New York Independent System Operator}},
  title        = {Ancillary Services Manual},
  howpublished = {\url{https://www.nyiso.com/documents/20142/2923301/ancserv.pdf}},
  note         = {Manual 2; accessed: Sep. 9, 2026},
  year         = {2026}
}

@techreport{nesoDRS2020,
  author      = {Evans, Alex and Kalair, Kieran and Warrington, Rachael and Woodhouse, Francis},
  title       = {Dynamic Reserve Setting: Improving Setting of Reserve Requirements through Machine Learning},
  institution = {National Grid Electricity System Operator},
  number      = {P20-050},
  year        = {2022},
  url         = {https://www.neso.energy/document/354026/download}
}
